\newtheorem{lemma}{Lemma}
\newtheorem{theorem}{Theorem}
\newtheorem{proposition}{Proposition}
\def\wt{\text{wt}}
\newcommand\nc\newcommand
\nc\bfa{{\boldsymbol a}}\nc\bfA{{\bf A}}\nc\cA{{\mathcal A}}
\nc\bfb{{\boldsymbol b}}\nc\bfB{{\boldsymbol B}}\nc\cB{{\mathcal B}}
\nc\bfc{{\boldsymbol c}}\nc\bfC{{\bf C}}\nc\cC{{\mathcal C}}
\nc\sC{{\mathscr C}}
\nc\bfd{{\boldsymbol d}}\nc\bfD{{\bfD}}
\nc\cD{{\mathcal D}}
\nc\bfe{{\boldsymbol e}}\nc\bfE{{\bf E}}\nc\cE{{\mathcal E}}
\nc\bff{{\boldsymbol f}}\nc\bfF{{\bf F}}\nc\cF{{\mathcal F}}
\nc\bfg{{\boldsymbol g}}\nc\bfG{{\bf G}}\nc\cG{{\mathcal G}}
\nc\bfh{{\boldsymbol h}}\nc\bfH{{\bf H}}\nc\cH{{\mathcal H}}
\nc\bfi{{\boldsymbol i}}\nc\bfI{{\bf I}}\nc\cI{{\mathcal I}}\nc\sI{{\mathscr I}}
\nc\bfj{{\boldsymbolj}}\nc\bfJ{{\bf J}}\nc\cJ{{\mathcal J}}
\nc\bfk{{\boldsymbolk}}\nc\bfK{{\bf K}}\nc\cK{{\mathcal K}}
\nc\bfl{{\boldsymboll}}\nc\bfL{{\bf L}}\nc\cL{{\mathcal L}}
\nc\bfm{{\boldsymbolm}}\nc\bfM{{\bf M}}\nc\cM{{\mathcal M}}
\nc\bfn{{\boldsymboln}}\nc\bfN{{\bf N}}\nc\cN{{\mathcal N}}
\nc\bfo{{\boldsymbolo}}\nc\bfO{{\bf O}}\nc\cO{{\mathcal O}}
\nc\bfp{{\boldsymbolp}}\nc\bfP{{\bf P}}\nc\cP{{\mathcal P}}
\nc\eP{{\EuScriptP}}\nc\fP{{\mathfrak P}}
\nc\bfq{{\boldsymbol q}}\nc\bfQ{{\bf Q}}\nc\cQ{{\mathcal Q}}
\nc\bfr{{\boldsymbol r}}\nc\bfR{{\bf R}}\nc\cR{{\mathcal R}}
\nc\bfs{{\boldsymbol s}}\nc\bfS{{\boldsymbol S}}\nc\cS{{\mathcal S}}
\nc\bft{{\boldsymbol t}}\nc\bfT{{\bf T}}\nc\cT{{\mathcal T}}
\nc\bfu{{\boldsymbol u}}\nc\bfU{{\bf U}}\nc\cU{{\mathcal U}}
\nc\bfv{{\boldsymbol v}}\nc\bfV{{\bf V}}\nc\cV{{\mathcal V}}
\nc\bfw{{\boldsymbol w}}\nc\bfW{{\bf W}}\nc\cW{{\mathcal W}}
\nc\bfx{{\boldsymbol x}}\nc\bfX{{\bf X}}\nc\cX{{\mathcal X}}
\nc\bfy{{\boldsymbol y}}\nc\bfY{{\bf Y}}\nc\cY{{\mathcal Y}}
\nc\bfz{{\boldsymbol z}}\nc\bfZ{{\bf Z}}\nc\cZ{{\mathcal Z}}
\nc\od{{\bar d}}\nc\ow{{\barw}}\nc\odelta{{\bar\delta}}
\nc\ox{{\bar x}}\nc\oy{{\bary}}\nc\ou{{\bar u}} \nc\oh{{\bar h}}
\newcommand\ff{{\mathbb F}}
\newcommand\integers{{\mathbb Z}}
\nc\dgv{\delta_{\text{\rm GV}}} \nc\dcrit{\delta_{\text{\rm{crit}}}}
\nc\Esp{E_{\text{\rm sp}}}
\renewcommand\epsilon{\varepsilon}
\nc\eps{\varepsilon} \nc\ueps{\underline{\eps}}
\nc\ut{\underline{\theta}} \nc\ugam{\underline{\gamma}}
\nc\hr{\overrightarrow{H}} \nc\hl{\overleftarrow{H}}
\nc\rp{P}
\nc\lp{{P}^\bot}
\newcommand{\beeq}{\begin{eqnarray*}}
\newcommand{\eneq}{\end{eqnarray*}}
\newcommand{\remove}[1]{}
\newcommand{\half}{\nicefrac12}
\title{\Large\bf Polar codes for $q$-ary channels, $q=2^r$}
\author{\IEEEauthorblockN{Woomyoung Park}
\IEEEauthorblockA{Department of ECE and Inst. Sys. Res.\\
University of Maryland\\
College Park, Maryland, 20742\\
Email: woomyoung.park@gmail.com}

\and \IEEEauthorblockN{Alexander Barg}
\IEEEauthorblockA{Department of ECE and Inst. Sys. Res.\\
University of Maryland\\
College Park, Maryland, 20742\\
Email: abarg@umd.edu} }
\begin{document}\maketitle
\thispagestyle{empty}

\begin{abstract}
We study polarization for nonbinary channels with input alphabet of size
$q=2^r,r=2,3,\dots$. Using Ar\i kan's polarizing kernel $H_2$, we prove
that the virtual channels that arise in the process of polarization converge
to $q$-ary channels with capacity $1,2,\dots,r$ bits, and that the total transmission
rate approaches the symmetric capacity of the channel. This leads to an explicit
transmission scheme for $q$-ary channels.
The error probability of decoding using successive cancellation
behaves as $\exp(-N^\alpha),$ where $N$ is the code length and $\alpha$ is any
constant less than $0.5.$
\end{abstract}


\section{Introduction}
Polarization is a new concept in information theory discovered in
the context of capacity-achieving families of codes for symmetric
memoryless channels and later generalized to source coding,
multi-user channels and other problems. Polarization was first
described by Ar{\i}kan \cite{ari09} who constructed binary codes
that achieve capacity of symmetric memoryless channels (and
``symmetric capacity'' of general binary-input channels).
The main idea of \cite{ari09} is to combine the bits of the source
sequence using repeated application of the ``polarization kernel''
{\small   $H_2=\Big(\hspace*{-.05in}\begin{array}{c@{\hspace*{0.05in}}c}
    1&0\\1&1\end{array}\hspace*{-.05in}\Big).$}
The resulting linear code of length $N=2^n$ has the generator matrix which
forms a submatrix of $G_N=B H_2^{\otimes n},$ where $B$ is a permutation matrix.
The choice of the rows of $G_N$
is governed by the polarization of virtual channels for individual bits that
arise in the process of channel combining and splitting. Namely, the data
bits are written in the coordinates that correspond to near-perfect channels
while the other bits are fixed to some values known to both the transmitter and
the decoder. It was shown later that polarization on binary
channels can be achieved using a variety of other kernels: in particular, any
$m\times m$ matrix whose columns cannot be arranged to form an upper triangular matrix,
achieves the desired polarization \cite{Korada2010}.

A study of polar codes for channels with nonbinary input was undertaken by
{\c S}a{\c s}o{\u g}lu et al. \cite{sas09,sas09a} and Mori and Tanaka \cite{mor10}.
For prime $q$, it suffices to take the kernel $H_2,$ while for nonprime
alphabets, the kernel is time-varying and not explicit. Namely, for prime $q,$
\cite{sas09} showed that there exist permutations of the input
alphabet such that the virtual channels for
individual $q$-ary symbols become either fully noisy or perfect, and
the proportion of perfect channels approaches the symmetric capacity,
in analogy with the results for binary codes in \cite{ari09}.
At the same time,
\cite{sas09} remarks that the transmission scheme that uses the kernel
$H_2$ with modulo-$q$ addition for composite $q$
does not necessarily lead to the polarization of the channels to the
two extremes. Rather, they show that there exists a sequence of permutations of
the input alphabet such that when they are combined with $H_2,$
the virtual channels for the transmitted symbols become either nearly perfect or
nearly useless.

The authors of \cite{sas09} suggest several alternatives to the kernel
$H_2$ that rely on randomized permutations or, in the case of $q=2^r,$
on multilevel schemes
that implement polar coding for each of the bits of the
symbol independently, combining them in the decoding procedure; see
esp. \cite{sas09a}.

In this paper we study polarization for channels with input alphabet of size $q=2^r,
r=2,3,\dots.$ Suppose that the channel is given by a stochastic matrix $W(y|x)$
where $x\in \cX, y\in\cY, \cX=\{0,1,\dots,q-1\},$ and $\cY$ is a finite alphabet. Assuming
that the channel combining is performed using the kernel $H_2$ with addition modulo
$q$, we establish results about the polarization of channels for individual
symbols. It turns out that virtual channels for the transmitted symbols converge
to one of $r+1$ {\em  extremal configurations} in which $j$ out of $r$ bits are
transmitted
near-perfectly while the remaining $r-j$ bits carry almost no information.  Moreover,
the good bits are always aligned to the right of the transmitted $r$-block, and no
other situations arise in the limit. Thus, the extremal configurations for information
rates that arise as a result of polarization are easily characterized: they form an
upper-triangular matrix as described in Sect.~\ref{sect:polar} (see also
Figs.~\ref{fig1},\,\ref{fig2} in the final section of the paper).
This characterization also constitutes the main difference of our results
from the multilevel scheme in \cite{sas09a}: there, the set of extremal configurations
can in principle have cardinality $2^r$ which complicates the code construction.

Another related work is the paper by Abbe and Telatar \cite{abb10a}. In it, the
authors observed multilevel polarization in a somewhat different context.  The main
result of their paper provides a characterization of extremal points of the region of
attainable rates when polar codes are used for each of the $r$ users of a
multiple-access channel.  Namely, as shown in \cite{abb10a} (see also \cite{abb10b}),
these points form a subset in the set of vertices of a matroid on the set of $r$
users.  \cite{abb10a} also remarks that these results translate directly to
transmission over a $q$-ary DMC, showing that the rate polarizes to many levels.  To
explain the difference between \cite{abb10a} and our work we note that transmission
over the multiple-access channel in \cite{abb10a} is set up in such a way that, once
applied to the DMC, it corresponds to encoding each bit of the $q$-ary symbol by its
own polar code (we again assume that $q=2^r$). In other words, the polarization
kernel employed is a linear operator $G=I_r\otimes H_2.$ Thus, the group acting on
$\cX$ is $\ff_{2^r}^+=\integers_2\times\dots\times\integers_2$ rather than the cyclic
additive group of order $q$ considered in this paper.

This work began as an attempt to construct polar codes for
the {\em ordered symmetric channel}, introduced in our earlier paper \cite{par11a}.
This channel provides an information-theoretic model related to the ordered distance
on binary $r$-vectors, defined as follows:
   \begin{equation}\label{eq:NRT}
     d_r(x,x')=\max\{j:x_j\ne x_j'\}, \quad\text{where }x,x'\in \{0,1\}^r.
  \end{equation}
Below $\wt_r(x)=d_r(x,0)$ denotes the ordered weight of the symbol $x$.
The ordered distance is an instance of a large class of metrics introduced in
\cite{bru95} following works of Niederreiter in numerical analysis \cite{nie91}. It
has subsequently appeared in a large number of works in algebraic combinatorics and
coding theory; see e.g., \cite{bar09b} and references therein.
We find it quite interesting that it independently arises
in the study of polar codes on channels with input of size $q=2^r.$
Examples of $q$-ary polar codes for
ordered symmetric channels can be easily constructed and analyzed.

Last but not least, when this work was in its final stages, we became
aware of the paper by Sahebi and Pradhan \cite{sah11} who also
observed the multilevel polarization phenomenon for $q$-ary
channels. At the same time, \cite{sah11} did not give a proof of
polarization, which constitutes the main technical part of our work.
The motivation of the approach of \cite{sah11} relates to a detailed
study of linear and group codes on $q$-ary channels, and is also
different from our approach.

In the next section we state and prove the main result, the convergence of the channels
to one of the $r+1$ extremal configurations, and deduce that polar codes achieve the symmetric
capacity of the channel. Then we derive the rate of polarization and estimate the error probability
of decoding, and give some examples.

\section{Polarization for $q$-ary channels}
We consider combining of the $q$-ary data under the action of the operator $H_2,$
where $q=2^r, r\ge 2.$
Let $W: \cX\to \cY, |\cX|=q$ be a discrete memoryless channel (DMC).
The {\em symmetric capacity} of the channel $W$ equals
\begin{align*}
I(W) \triangleq \sum_{x \in \cX}\sum_{y\in\cY} \frac{1}{q}
W(y|x)\log{\frac{W(y|x)}{\sum_{x'\in\cX}\frac{1}{q}W(y|x')}}
\end{align*}
where the base of the logarithm is $2$.
Define the combined channel $W_2$ and the channels $W^-$ and $W^+$ by
\begin{align}
W_2(y_1, y_2|u_1, u_2)
&= W(y_1|u_1 + u_2) W(y_2|u_2),\nonumber\\
W^-(y_1,y_2|u_1) &= \sum_{u_2 \in \cX} \frac{1}{q}
W_2(y_1, y_2|u_1, u_2),\label{eq:+-1}\\
W^+(y_1, y_2, u_1| u_2) &= \frac{1}{q} W_2(y_1, y_2 |u_1, u_2),
\label{eq:+-2}\end{align}
where $u_1, u_2, y_1, y_2$ are $r$-vectors and $+$ is a
modulo-$q$ sum. This transformation can be applied recursively to the channels
$W^-,W^+$ resulting in four channels of the form $W^{b_1b_2}, b_1,b_2\in\{+.-\}$. After $n$
steps we obtain $N=2^n$ channels $W_N^{(j)}, j=1,\dots, N.$
For the case $q=2$ it is shown in \cite{ari09} that as $n$ increases, the channels $W_N^{(j)}$
become either almost perfect or almost completely noisy (polarize). In formal terms,
for any $\epsilon>0$
   \begin{equation}\label{eq:binary}
    \lim_{n\to\infty}\frac{|\{b\in\{+,-\}^n: I(W^b)\in(\epsilon,1-\epsilon)\}|}{2^n}=0.
   \end{equation}
In this paper we extend this result to the case $q=2^r, r>1$.

As shown
in \cite{ari09}, after $n$ steps of the transformation \eqref{eq:+-1}-\eqref{eq:+-2}
the channels $W_N^{(i)}:\cX\to\cY^N\times\cX^{i-1}, 1\le i\le N$ are given by
  \begin{equation}\label{eq:WNi}
    W_N^{(i)}(y_1^N,u_1^{i-1}|u_i)=\frac1{q^{N-1}}\sum_{u_{i+1}^N\in \cX^{N-i}}W^N(y_1^N|u_1^NG_N),
  \end{equation}
where $G_N=BH_2^{\otimes n}$ and $B$ is a permutation matrix. Here we use the shorthand notation
for sequences of symbols: for instance, $y_1^N\triangleq (y_1,y_2,\dots,y_N),$ etc.

\subsection{Notation}

For any pair of input symbols
$x,x' \in \cX$, the Bhattacharyya distance
between them is
\begin{align*}
Z(W_{\{x,x'\}}) = \sum_{y \in \cY}\sqrt{W(y|x)W(y|x')}
\end{align*}
where $W_{\{x,x'\}}$ is the channel obtained by restricting the input
alphabet of $W$ to the subset $\{x,x'\} \subset \cX$.

Define the quantity $Z_{v}(W)$ for $v \in \cX\setminus\{0\}$:
  $$
Z_{v}(W) = \frac{1}{2^r} \sum_{x\in\cX}Z(W_{\{x,x+v\}}).
  $$
Introduce the $i$th average Bhattacharyya distance of the channel $W$ by
  \begin{equation}\label{eq:zi}
Z_i(W) = \frac{1}{2^{i-1}}\sum_{v\in\cX_i}Z_{v}(W)
  \end{equation}
where $i = 1,2,\cdots,r$ and $\cX_i=\{v\in \cX:\wt_r(v)=i\}.$ Then
\begin{align}\label{eq:ZW}
  Z(W):&=\frac{1}{2^r(2^r-1)}\sum_{x\ne x'}
  Z(W_{\{x,x'\}})\nonumber\\
  &=\frac1{2^r-1} \sum_{i=1}^r 2^{i-1}Z_i(W)
\end{align}

Recall the setting of \cite{ari09} for the evolution of the channel parameters.
On the set $\Omega=\{+,-\}^\ast$ of semi-infinite binary sequences
define a $\sigma$-algebra $\cF$ on $\Omega$ generated by the cylinder sets
$S(b_1,\dots,b_n)=\{\omega\in \Omega:\omega_1=b_1,\dots,\omega_n=b_n\}$
for all sequences $(b_1,\dots,b_n)\in\{+,-\}^n$ and for all $n\ge 0.$
Consider the probability space $(\Omega,\cF,P),$ where
$P(S(b_1,\dots,b_n))=2^{-n}, n\ge 0.$ Define a filtration
$\cF_0\subset \cF_1\subset\dots\subset \cF$ where $\cF_0=\{\emptyset,\Omega\}$
and $\cF_n, n\ge 1$ is generated by the cylinder sets $S(b_1,\dots,b_n), b_i\in\{+,-\}.$

Let $B_i,i=1,2,\cdots$ be i.i.d. $\{+,-\}$-valued random variables with
$\Pr(B_1 = +) = \Pr(B_1 = -) = 1/2.$ The random channel emerging at time $n$ will
be denoted by $W^B,$ where $B=(B_1,B_2,\cdots,B_n).$ Thus, $P(W^B=W_N^{(i)})=2^{-n}$
for all $i=1,\dots,2^n.$ Let $W_n=W^B,$ $I_n=I(W^B),$
$Z_{\{x,x'\},n}=Z(W_{\{x,x'\}}^B),$ $Z_{v,n}=Z_{v}(W^B),$ and $Z_{i,n}=Z_i(W^B).$
These random variables are adapted to the above filtration (meaning that $I_n$ etc.
are measurable w.r.t. $\cF_n$ for every $n\ge 1$).

\subsection{Channel polarization}\label{sect:polar}
In this section we state a sequence of results that shows that $q$-ary polar codes
based on the kernel $H_2$ can be used to transmit
reliably over the channel $W$ for all rates $R<I(W).$

\vspace*{.05in}\begin{theorem}\label{thm:main} {\sl
(a) Let $n\to \infty.$ The random variable $I_n$ converges a.e. to a random variable
$I_\infty$ with $E(I_\infty)=I(W).$

(b) For all $i=1,2,\dots,r$
  $$
   \lim_{n\to\infty}Z_{i,n}= Z_{i,\infty} \quad a.e.,
  $$
where the variables $Z_{i,\infty}$ take values $0$ and $1.$
With probability one the vector $(Z_{i,\infty}, i=1,\dots,r) $ takes one
of the following values:
  \begin{equation}\label{eq:triangular}
\begin{array}{*5{c@{\hspace*{0.03in}}}}
(Z_{1,\infty}=0,&Z_{2,\infty}=0,&\dots,&Z_{r-1,\infty}=0,&Z_{r,\infty}=0)\\
(Z_{1,\infty}=1,&Z_{2,\infty}=0,&\dots,&Z_{r-1,\infty}=0,&Z_{r,\infty}=0)\\
(Z_{1,\infty}=1,&Z_{2,\infty}=1,&\dots,&Z_{r-1,\infty}=0,&Z_{r,\infty}=0)\\
\vdots&&\vdots&&\vdots\\
(Z_{1,\infty}=1,&Z_{2,\infty}=1,&\dots,&Z_{r-1,\infty}=1,&Z_{r,\infty}=0)\\
(Z_{1,\infty}=1,&Z_{2,\infty}=1,&\dots,&Z_{r-1,\infty}=1,&Z_{r,\infty}=1).
\end{array}
  \end{equation}

\remove{(c) Let $\epsilon>0$ and $D=[0,\epsilon)\cup(\bigcup\limits_{k=1}^{r-1}(k-\epsilon,k+\epsilon))
   \cup (r-\epsilon,r].$ Then
   $$
  \lim_{n\to\infty}\frac{|\{i:I(W_N^{(i)})\in D\}|}{2^n}=1.
   $$ }
}
\end{theorem}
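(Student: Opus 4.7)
The plan is to emulate Ar\i kan's scalar martingale argument from \cite{ari09}, tracking the vector-valued statistic $(Z_{1,n},\ldots,Z_{r,n})$ simultaneously. For part (a), a direct computation from \eqref{eq:+-1}--\eqref{eq:+-2} gives the mutual-information conservation $I(W^+)+I(W^-)=2I(W)$, so $I_n$ is a bounded martingale (values in $[0,r]$) adapted to the filtration $\cF_n$. Doob's martingale convergence theorem yields the a.e.\ limit $I_\infty$, and bounded convergence gives $E(I_\infty)=E(I_0)=I(W)$.

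For part (b) I would first derive single-pair recursions. An elementary manipulation of \eqref{eq:+-1}--\eqref{eq:+-2} yields the exact identity $Z_v(W^+)=Z_v(W)^2$ for every $v\ne 0$. For the minus branch, writing $W^-(y_1,y_2|u_1)=\tfrac{1}{q}\sum_{u_2}W(y_1|u_1+u_2)W(y_2|u_2)$ and applying the bound $\sqrt{\sum_i x_i}\le\sum_i\sqrt{x_i}$ produces a convolution-type inequality that bounds $Z_v(W^-)$ by a sum of products $Z_{v_1}(W)Z_{v_2}(W)$ over decompositions $v_1+v_2=v\pmod{2^r}$. Averaging these via \eqref{eq:zi} over $v\in\cX_i$, and combining with the obvious $Z_v(W^+)=Z_v(W)^2\le Z_v(W)$, exhibits each $(Z_{i,n})_n$ as a bounded supermartingale on $\cF_n$. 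Almost-sure convergence $Z_{i,n}\to Z_{i,\infty}$ follows, and passing the plus-branch identity to the limit forces $Z_{i,\infty}^2=Z_{i,\infty}$ a.e., so $Z_{i,\infty}\in\{0,1\}$ for every $i$.

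The final task, establishing the triangular structure \eqref{eq:triangular}, is the core of the argument. Equivalently, I need $\Pr(Z_{j,\infty}=0,\,Z_{i,\infty}=1)=0$ whenever $j<i$, i.e.\ that smallness at a lower weight propagates upward. This is driven by the additive structure of $\wt_r$ under modulo-$2^r$ addition: for any $v\in\cX_i$ with $i\ge 2$ one has decompositions $v=v_1+v_2\pmod{2^r}$ with $\wt_r(v_1),\wt_r(v_2)<i$ (for instance by splitting off the top bit of $v$). Feeding these into the minus-side convolution bound from the previous step makes the coupling products $Z_{v_1}(W)Z_{v_2}(W)$ small whenever some $Z_{j,n}$ with $j<i$ is already small, so $Z_{i,n+1}^-$ inherits the smallness of lower-weight parameters on the minus-branches. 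An induction on $i$, combined with the $2^{-n}$ symmetry of $\cF_n$, then rules out every non-triangular pattern.

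The main obstacle I expect is sharpening the minus-side inequality so that the coupling term is genuinely controlled by products $Z_{j_1}(W)Z_{j_2}(W)$ with $\max(j_1,j_2)<i$ for every $v\in\cX_i$. Carries in the decompositions $v=v_1+v_2\pmod{2^r}$ have to be handled carefully, and wrap-around decompositions (which can introduce weight-$r$ summands) must be absorbed into the lower-weight $Z_j$'s through the supermartingale dynamics. Once the correct convolution identity is in place, the $\{0,1\}$-valuedness of the limits and the martingale symmetry together deliver the triangular structure \eqref{eq:triangular}.
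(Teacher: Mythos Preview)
Your supermartingale claim is the central gap. From the minus-side bound $Z_v(W^-)\le 2Z_v(W)+\sum_{\delta\ne 0,-v}Z_\delta(W)Z_{v+\delta}(W)$ and the plus-side identity $Z_v(W^+)=Z_v(W)^2$, the conditional average is at least $\tfrac12 Z_v(W)^2+Z_v(W)$ plus nonnegative cross terms, which exceeds $Z_v(W)$ whenever $Z_v(W)>0$. So neither $Z_{v,n}$ nor its weight-average $Z_{i,n}$ is a supermartingale, and you have no a.e.\ convergence to begin with. Even if you had convergence, note that $Z_i(W^+)=\tfrac{1}{2^{i-1}}\sum_{v\in\cX_i}Z_v(W)^2\ne Z_i(W)^2$ in general, so the ``$Z_{i,\infty}^2=Z_{i,\infty}$'' step does not follow at the level of $Z_i$.

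The paper avoids this by working with $Z_{\max}^{[s,r]}(W)=\max_{s\le j\le r}\max_{v\in\cX_j}Z_v(W)$, which does satisfy a clean pair of one-step relations: it squares on the plus branch and is at most $q$ times itself on the minus branch. This process is still not a supermartingale, so a dedicated convergence lemma (Lemma~\ref{lemma:conv}) is proved for processes with exactly these two properties; that lemma is the technical heart of the proof and has no analogue in your outline. Convergence of the individual $Z_{\max,n}^{(j)}$, then of each $Z_{v,n}$, and finally of $Z_{i,n}$, is deduced from it.

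Your mechanism for the triangular structure also fails: the minus-side bound has a leading $2Z_v(W)$ term that is not controlled by lower-weight parameters, so smallness of $Z_{j,n}$ for $j<i$ cannot force smallness of $Z_{i,n+1}$ on the minus branch. The paper proves the contrapositive direction via a geometric fact (Lemma~\ref{lemma:bhattaorder}): if $Z_v(V)\ge 1-\delta$ with $\wt_r(v)=i$, then $Z_{v'}(V)\ge 1-q^3\delta$ for every $v'$ with $\wt_r(v')\le i$, because $v'=tv$ for some integer $t$ and a triangle inequality on the likelihood vectors $(\sqrt{V(y|x)})_y$ applies. This, not the convolution bound, is what yields $Z_{j,\infty}\ge Z_{i,\infty}$ for $j<i$.
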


\vspace*{.05in} Let us restate part (b) of this theorem for finite $n$.
\begin{proposition}\label{prop:finite}
{\sl Let $\epsilon,\delta>0$ be fixed.
For $k=0,1,\dots,r$ define disjoint events
  $$
B_{k,n}(\epsilon)=\Big\{\omega: (Z_{1,n},Z_{2,n},\dots,Z_{r,n})\in \cR_k\Big\}
  $$
where $\cR_k=\cR_k(\epsilon)\triangleq
\Big(\prod_{i=1}^{k}D_1\Big)\times \Big(\prod_{i=k+1}^r D_0\Big)
  $ and
$D_0=[0,\epsilon),$ $D_1=(1-\epsilon,1].$
Then $P(\cup_{k=0}^{r} B_{k,n}(\epsilon))\ge 1-\delta$ starting from some $n=n(\epsilon,\delta).$ }
\end{proposition}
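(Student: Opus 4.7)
The plan is to derive the finite-$n$ statement directly from the almost-everywhere convergence already established in Theorem \ref{thm:main}(b). The key observation is that each $Z_{i,n}\in[0,1]$, each limit $Z_{i,\infty}\in\{0,1\}$, and $(Z_{1,\infty},\dots,Z_{r,\infty})$ takes one of the $r+1$ triangular vectors in \eqref{eq:triangular} with probability one, so once every coordinate $Z_{i,n}$ is within $\epsilon$ of its limit, the full vector automatically lies in one of the product regions $\cR_k(\epsilon)$.

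Concretely, first I introduce the event
$$A_n=\bigcap_{i=1}^{r}\bigl\{|Z_{i,n}-Z_{i,\infty}|<\epsilon\bigr\}\,\cap\,\bigl\{(Z_{1,\infty},\dots,Z_{r,\infty})\in T\bigr\},$$
where $T$ denotes the finite set of $r+1$ vectors listed in \eqref{eq:triangular}. On $A_n$, if the realized limit is the vector whose first $k$ coordinates equal $1$ and whose remaining $r-k$ coordinates equal $0$, then because $Z_{i,n}\in[0,1]$ and $|Z_{i,n}-Z_{i,\infty}|<\epsilon$, we obtain $Z_{i,n}\in(1-\epsilon,1]=D_1$ for $i\le k$ and $Z_{i,n}\in[0,\epsilon)=D_0$ for $i>k$. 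Hence $A_n\subseteq\bigcup_{k=0}^{r}B_{k,n}(\epsilon)$, and the $B_{k,n}(\epsilon)$ are disjoint by construction.

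The second step is to bound $P(A_n^c)$. By Theorem \ref{thm:main}(b), $P\bigl((Z_{1,\infty},\dots,Z_{r,\infty})\in T\bigr)=1$, so that factor contributes nothing to the complement. The same theorem gives $Z_{i,n}\to Z_{i,\infty}$ almost everywhere for each $i\in\{1,\dots,r\}$, which implies convergence in probability; therefore there exists $n_i=n_i(\epsilon,\delta)$ such that $P\bigl(|Z_{i,n}-Z_{i,\infty}|\ge\epsilon\bigr)<\delta/r$ for all $n\ge n_i$. A union bound over $i$ then yields $P(A_n^c)<\delta$ for every $n\ge n(\epsilon,\delta):=\max_{1\le i\le r} n_i$, and combined with $A_n\subseteq\bigcup_k B_{k,n}(\epsilon)$ this gives $P\bigl(\bigcup_{k=0}^{r}B_{k,n}(\epsilon)\bigr)\ge 1-\delta$, as required.

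The main obstacle is essentially conceptual rather than technical: all the work sits inside Theorem \ref{thm:main}(b), whose proof is the heart of the paper. Once that is in hand, the proposition is a routine translation of a.s.\ convergence toward a $\{0,1\}^r$-valued limit (with known finite support) into a lower bound on the probability of landing in $\epsilon$-neighborhoods of that support. No martingale argument, rate-of-convergence estimate, or analysis of the $(+,-)$-recursion is required at this stage; the union bound on a finite index set $i=1,\dots,r$ suffices.
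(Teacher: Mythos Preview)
Your argument is correct and actually cleaner than the paper's. The paper proves Proposition~\ref{prop:finite} by applying Egorov's theorem coordinate by coordinate (starting with $Z_{r,n}$), and then, instead of simply quoting that the limit vector lies in the triangular set $T$, it re-invokes Lemma~\ref{lemma:bhattaorder} at the finite-$n$ level: once $Z_{r,n}>1-\epsilon/2^{4r-1}$ on a large set, that lemma forces $Z_{i,n}>1-\epsilon$ for all $i<r$, yielding $B_{r,n}$; it then repeats the same cascade on the complement with $Z_{r-1,n}$, and so on down to $B_{0,n}$, losing $\gamma=\delta/r$ in probability at each step. Your route sidesteps both Egorov and the second use of Lemma~\ref{lemma:bhattaorder}: since Theorem~\ref{thm:main}(b) has already shown that $(Z_{1,\infty},\dots,Z_{r,\infty})\in T$ a.e., it suffices to bring each $Z_{i,n}$ within $\epsilon$ of $Z_{i,\infty}$ in probability and apply a union bound over $i=1,\dots,r$. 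What the paper's approach buys is a slightly more self-contained derivation that does not need the full almost-sure statement about the limit vector, only the coordinate-wise convergence $Z_{j,n}\to Z_{j,\infty}\in\{0,1\}$; what your approach buys is brevity and the elimination of unnecessary machinery once Theorem~\ref{thm:main}(b) is fully in hand.
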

\vspace*{.05in} The proofs of these statements are given in a later part of this section.

\vspace*{.05in}
We need the following lemma.

\begin{lemma}\label{lemma:iwzw} For a DMC with $q$-ary input, $I(W)$ and
$Z(W)$ are related by
\begin{align}
I(W) &\geq \log{\frac{2^r}{1 + \sum_{i=1}^{r} 2^{i-1}Z_i(W)}}\label{eq:lower}\\
I(W) &\leq \sum_{i=1}^{r} \sqrt{1-Z_i(W)^2}.\label{eq:upper}
\end{align}
\end{lemma}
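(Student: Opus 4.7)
The plan is to handle the two bounds separately, as they rest on different tools.

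For the lower bound \eqref{eq:lower}, I would work via the R\'enyi entropy of order $\half$. Writing $I(W)=\log q-H(X\mid Y)$ for a uniform input $X$, the elementary bound $H(p)\le H_{1/2}(p)=2\log\sum_{x}\sqrt{p_x}$ together with Jensen's inequality (concavity of $\log$) gives $I(W)\ge\log q-\log E[(\sum_{x}\sqrt{P(x\mid Y)})^{2}]$. Expanding the square, the diagonal terms contribute $\sum_{x}E[P(x\mid Y)]=1$, while the identity $q\,E[\sqrt{P(x\mid Y)P(x'\mid Y)}]=Z(W_{\{x,x'\}})$ shows that the cross terms sum to $(q-1)Z(W)$. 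Substituting $(q-1)Z(W)=\sum_{i=1}^{r}2^{i-1}Z_i(W)$ from \eqref{eq:ZW} then yields \eqref{eq:lower}. This part is routine.

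For the upper bound \eqref{eq:upper}, the plan is to prove a pointwise inequality on each posterior and then average. I would identify $\cX$ with $\integers_{2^r}$ so that $G_i\triangleq\{v\in\cX:\wt_r(v)\le i\}=2^{r-i}\integers_{2^r}$; then $G_0\subset G_1\subset\cdots\subset G_r=\cX$ is a tower of subgroups with $|G_i/G_{i-1}|=2$ and $\cX_i=G_i\setminus G_{i-1}$. The target pointwise inequality is
\[
    \log q-H(p)\;\le\;\sum_{i=1}^{r}\sqrt{1-\tilde Z_i(p)^{2}},\qquad
    \tilde Z_i(p)\triangleq\frac{1}{2^{i-1}}\sum_{v\in\cX_i}\sum_{x\in\cX}\sqrt{p_{x}\,p_{x+v}},
\]
valid for every probability distribution $p$ on $\cX$. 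Applying this with $p=P_{X\mid Y=y}$ and integrating over $y$, concavity of $z\mapsto\sqrt{1-z^{2}}$ together with the identity $E[\tilde Z_i(P_{X\mid Y})]=Z_i(W)$ (same computation as in the lower bound) delivers \eqref{eq:upper}.

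To establish the pointwise inequality, let $\pi_i:\cX\to\cX/G_i$ be the natural projection and apply the entropy chain rule along the tower to write $\log q-H(p)=\sum_{i=1}^{r}\bigl(1-H(\pi_{i-1}(X)\mid\pi_i(X))\bigr)$ with $X\sim p$. For each $i$, conditioned on the $G_i$-coset $C$, $\pi_{i-1}(X)$ is a Bernoulli variable with parameter $\alpha_C=p(C_0)/P(C)$, where $C_0,C_1$ are the two $G_{i-1}$-sub-cosets of $C$; combining the binary pointwise bound $1-h_b(\alpha)\le\sqrt{1-4\alpha(1-\alpha)}$ with Jensen for $\sqrt{\cdot}$ bounds the $i$-th summand by $\sqrt{1-4\sum_{C}p(C_0)p(C_1)/P(C)}$. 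The main work is then the Cauchy--Schwarz estimate $\tilde Z_i(p)^{2}\le 4\sum_{C}p(C_0)p(C_1)/P(C)$: because $\cX_i=G_i\setminus G_{i-1}$, every pair $\{x,x+v\}$ with $v\in\cX_i$ lies inside one $G_i$-coset $C$ and crosses $C_0,C_1$, so $\tilde Z_i(p)=\frac{2}{2^{i-1}}\sum_{C}s_{C,0}s_{C,1}$ with $s_{C,j}=\sum_{x\in C_j}\sqrt{p_x}$; two successive Cauchy--Schwarz steps (first $s_{C,j}^{2}\le|C_j|\,p(C_j)$, then $\bigl(\sum_{C}\sqrt{p(C_0)p(C_1)}\bigr)^{2}\le\sum_{C}p(C_0)p(C_1)/P(C)$ with weights $\sqrt{P(C)}$) finish it. The principal obstacle is setting up this combinatorial decomposition cleanly and verifying that the $\tilde Z_i$-sum factors along the $G_{i-1}$-sub-cosets; once this is in place, the summation over $i$ and the Jensen step over $Y$ are routine.
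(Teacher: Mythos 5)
Your proof of both bounds is correct, but the route you take for the upper bound \eqref{eq:upper} -- the genuinely new part of the lemma -- is substantially different from, and in my view cleaner than, the one the paper takes.

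For the lower bound \eqref{eq:lower}, the paper does not re-derive it; it simply observes that it is a restatement, via the identity $(2^r-1)Z(W)=\sum_i 2^{i-1}Z_i(W)$ from \eqref{eq:ZW}, of Prop.~3 in the cited work of \c{S}a\c{s}o\u{g}lu, Telatar, and Ar{\i}kan. Your R\'enyi-$\tfrac12$ derivation (monotonicity of R\'enyi entropies, Jensen for $\log$, and the computation $q\,E_Y[\sqrt{P(x|Y)P(x'|Y)}]=Z(W_{\{x,x'\}})$) is correct and is essentially the standard proof of that cited proposition; nothing to add.

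For the upper bound the paper's Appendix proof is organized around two lemmas of a different flavor: Lemma~\ref{lemma:mixedbhatta}, a Cauchy--Schwarz bound on the Bhattacharyya parameter of a ``mixed'' binary-input channel whose two inputs are uniform mixtures over $k$-element subsets of $\cX$; and Lemma~\ref{lemma:sum}, which writes $I(W)$ as a telescoping weighted sum of symmetric capacities $I(W^{(k)}_{\{z^{(1)}_m,z^{(2)}_m\}})$ over a carefully bookkept index set $\cA(x_1,\dots,x_{m+1})$ of tuples with $x_2,\dots,x_{m+1}$ ``linearly independent'' under $\oplus$; the proof then applies the binary bound $I\le\sqrt{1-Z^2}$ to each B-DMC together with concavity and Lemma~\ref{lemma:mixedbhatta}, and the authors themselves describe the resulting bookkeeping as cumbersome. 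Your argument replaces all of that with the canonical filtration $\{0\}=G_0\subset G_1\subset\dots\subset G_r=\integers_{2^r}$, $G_i=2^{r-i}\integers_{2^r}$, which is exactly $\{v:\wt_r(v)\le i\}$ under the mapping $v\mapsto\sum_j v_j 2^{r-j}$; the entropy chain rule $\log q-H(p)=\sum_{i=1}^r\bigl(1-H(\pi_{i-1}(X)\mid\pi_i(X))\bigr)$; the elementary binary bound $1-h_b(\alpha)\le\sqrt{1-4\alpha(1-\alpha)}$ (which holds since $\sqrt{1-4\alpha(1-\alpha)}=|1-2\alpha|$ and $h_b$ is concave); the coset factorization $\tilde Z_i(p)=\tfrac{2}{2^{i-1}}\sum_C s_{C,0}s_{C,1}$ (valid because for $x\in C_0$ the map $v\mapsto x+v$ is a bijection $\cX_i\to C_1$); and two Cauchy--Schwarz steps to get $\tilde Z_i(p)^2\le 4\sum_C p(C_0)p(C_1)/p(C)$. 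Averaging over $Y$ with Jensen for $z\mapsto\sqrt{1-z^2}$ then closes the argument; I verified $E_Y[\tilde Z_i(P_{X|Y})]=Z_i(W)$ and that $4\sum_C p(C_0)p(C_1)/p(C)\le 1$, so all the square roots and Jensen steps are well-defined. What your approach buys is a conceptual explanation -- $\eqref{eq:upper}$ is just the binary bound applied level-by-level along the unique maximal subgroup chain of $\integers_{2^r}$ -- and it avoids the paper's auxiliary notion of mixed channels $W^{(k)}_{\{z^{(1)},z^{(2)}\}}$ and the combinatorics over $\cA(x_1,\dots,x_{m+1})$ entirely. It also makes the alignment with the paper's ordered-weight filtration (and hence with Lemma~\ref{lemma:bhattaorder}, Proposition~\ref{prop:bits}, etc.) structurally transparent rather than incidental.
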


For $r=1$ these inequalities are proved in \cite{ari09}.
For $r>1$ Eq.~\eqref{eq:lower} is a restatement of \cite[Prop.~3]{sas09} using \eqref{eq:ZW}.
The fact that \eqref{eq:upper} holds for all $r>1$ is new, and is proved in the Appendix.

Inequalities \eqref{eq:lower}-\eqref{eq:upper} imply that if
$(Z_1,\dots,Z_r)\in \cR_k(\epsilon)$ then $|I(W)-(r-k)|\le \delta$ where
  $
  \delta\ge\max(k\sqrt\epsilon,(2^{r-k}-1)\epsilon\log e).
  $

The following proposition is an immediate corollary of the above results.
\begin{proposition}\label{prop:Ik}{\sl
(a) The random variable $I_\infty$ is supported on the set $\{0,1,\dots,r\}.$

(b) For every $0\le k\le r$ and every $\delta>0$
there exists $\epsilon>0$ such that
   $$
    \lim_{n\to\infty}P(\{|I_n-(r-k)|\le\delta\} \bigtriangleup
 B_{k,n}(\epsilon))
  =0.
   $$

(c) $E(|\{i:Z_{i,\infty}=0\}|)=I(W).$}
\end{proposition}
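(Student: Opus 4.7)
The plan is to leverage Theorem~\ref{thm:main} and Lemma~\ref{lemma:iwzw} without introducing anything genuinely new; each of the three parts falls out by pushing the a.e.\ convergence through the two-sided bound on $I(W)$. The key observation I would record first is that both inequalities in Lemma~\ref{lemma:iwzw} collapse to the same integer at every extremal configuration in \eqref{eq:triangular}. At the $k$-th configuration one has $\sum_{i=1}^r 2^{i-1}Z_i=2^k-1,$ so \eqref{eq:lower} gives $I(W)\ge r-k,$ while $\sqrt{1-Z_i^2}$ equals $0$ for $i\le k$ and $1$ for $i>k,$ so \eqref{eq:upper} gives $I(W)\le r-k.$ Thus $I(W)=r-k$ exactly there, and the quantitative remark following Lemma~\ref{lemma:iwzw} controls the deviation of $I(W)$ from $r-k$ throughout the neighborhood $\cR_k(\epsilon).$

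For part (a) I would combine Theorem~\ref{thm:main}(b), which says that $(Z_{1,n},\dots,Z_{r,n})$ converges a.e.\ to a (random) extremal configuration indexed by some $k\in\{0,1,\dots,r\},$ with Theorem~\ref{thm:main}(a), which gives $I_n\to I_\infty$ a.e.\ The collapse above then forces $|I_n-(r-k)|\to 0$ on the event that the limit is the $k$-th configuration, so $I_\infty=r-k$ a.e.\ on that event, and in particular $I_\infty\in\{0,1,\dots,r\}$ a.e. This simultaneously identifies $I_\infty$ with $|\{i:Z_{i,\infty}=0\}|,$ so part (c) follows immediately because $E(I_\infty)=I(W)$ is precisely the content of Theorem~\ref{thm:main}(a).

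For part (b), fix $\delta>0,$ set $\delta_0=\min(\delta,1/3),$ and choose $\epsilon>0$ so small that the slack $\max\bigl(k\sqrt{\epsilon},(2^{r-k}-1)\epsilon\log e\bigr)<\delta_0$ uniformly in $k\in\{0,\dots,r\}.$ The quantitative version of Lemma~\ref{lemma:iwzw} then gives $B_{k,n}(\epsilon)\subset\{|I_n-(r-k)|\le\delta\},$ so $B_{k,n}(\epsilon)\setminus\{|I_n-(r-k)|\le\delta\}=\emptyset.$ Conversely, any $\omega\in\{|I_n-(r-k)|\le\delta\}\setminus B_{k,n}(\epsilon)$ either lies in $B_{j,n}(\epsilon)$ for some $j\ne k$---in which case $|I_n-(r-j)|\le\delta_0$ contradicts the integer spacing $|(r-k)-(r-j)|\ge 1$---or lies outside $\bigcup_j B_{j,n}(\epsilon),$ a set whose probability tends to $0$ by Proposition~\ref{prop:finite}. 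The only mild pitfall is precisely this $\epsilon$-$\delta$ choice: the integer separation of the target values $r-k$ is what prevents distinct $\cR_k$-neighborhoods from overlapping in the $I$-coordinate, and the rest is pure bookkeeping on top of Theorem~\ref{thm:main}.
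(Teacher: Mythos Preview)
Your proposal is correct and follows essentially the same route as the paper's proof: part~(a) drops out of the two-sided bound in Lemma~\ref{lemma:iwzw} evaluated at the extremal configurations, part~(b) is exactly the inclusion $B_{k,n}(\epsilon)\subset\{|I_n-(r-k)|\le\delta\}$ plus the integer-spacing disjointness from other $B_{j,n}$'s plus $P(\bigcup_j B_{j,n})\to 1$, and part~(c) is the identification $I_\infty=|\{i:Z_{i,\infty}=0\}|$ together with $E(I_\infty)=I(W)$. One small caveat, shared by the paper's own argument: the integer-spacing contradiction in~(b) needs $\delta+\delta_0<1$, so as written it only covers $\delta<2/3$; but this is the only meaningful range for the statement anyway.
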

\begin{proof} The first statement is obvious from \eqref{eq:lower}-\eqref{eq:upper}.
To prove the second statement we note that, with the appropriate choice of $\epsilon$
  $$
   \{|I_n-(r-k)|\le\delta\}\supset  B_{k,n}(\epsilon)
  $$
for all $n\ge 0.$ At the same time,
  $
   P(\{|I_n-(r-k)|\le\delta\}\cap B_{k',n}(\epsilon))=0
  $
for all $k'\ne k,$ and $P(\stackrel{\circ}{\cup} B_{k,n}(\epsilon))\to 1$ for any $\epsilon>0.$
Together this implies (b).
Finally, we have that $E(I_\infty)=I(W).$ Then use (a) and (b) to claim that
   $
E(|\{i:Z_{i,\infty}=0\}|)=\sum_{k=0}^{r} k P(I_{\infty}=k) =I(W).
   $
\end{proof}

\vspace*{.05in} We can say a bit more about the nature of convergence established in this
proposition.
Let us fix $k\in\{0,1,\dots,r\}$ and
define the channel for the $r-k$ rightmost bits of the
transmitted symbol as follows:
\begin{align*}
    W^{[r-k]}(y|u) = \frac1{2^k}\sum_{x\in \cX: x_{k+1}^r=u}
    W(y|x), \qquad u \in \{0,1\}^{r-k}
\end{align*}
where $x = (x_1,x_2,\dots,x_r)$.
\vspace*{.05in}\begin{lemma} {\sl Let $V:\cX\to \tilde \cY$ be a DMC and let $\delta>0.$
Suppose that
$(Z_{1,n}(V),Z_{2,n}(V),\dots,Z_{r,n}(V))\in\cR_k(\epsilon),$
for some $0\le k\le r.$ If $\epsilon$ is sufficiently small, then
$I(V^{[r-k]})\ge r-k-\delta.$ In particular, it suffices to take
$\epsilon\le 2^{-k+\delta}/(2^{r-k}-1).$}
\end{lemma}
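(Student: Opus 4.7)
The plan is to deduce the lemma from the lower bound \eqref{eq:lower} applied directly to the channel $V^{[r-k]}$, after first controlling each $Z_j(V^{[r-k]})$ in terms of a single $Z_{k+j}(V)$. Intuitively, $V^{[r-k]}$ is realized by prepending $k$ uniform random bits $a\in\{0,1\}^k$ to the input $u\in\{0,1\}^{r-k}$ before feeding into $V$. Since the hypothesis places every difference vector of ordered weight $\leq k$ in the ``bad'' regime, this randomization should cost the receiver essentially no information about $u$.

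The core inequality I would establish is that for any $u\in\{0,1\}^{r-k}$ and any nonzero $v'\in\{0,1\}^{r-k}$,
\[
Z(V^{[r-k]}_{\{u,u+v'\}}) \;=\; \frac{1}{2^k}\sum_y \sqrt{\sum_{a,b\in\{0,1\}^k} V(y|a,u)\,V(y|b,u+v')}
\]
is bounded above by $\frac{1}{2^k}\sum_{a,b}Z(V_{\{(a,u),(b,u+v')\}})$, via the elementary estimate $\sqrt{\sum_i x_i}\leq \sum_i\sqrt{x_i}$ (for $x_i\geq 0$) followed by a swap of summations. Averaging over $u$ and reindexing by $c=a+b$ then reduces this to the cleaner form $Z_{v'}(V^{[r-k]}) \leq \sum_{c\in\{0,1\}^k} Z_{(c,v')}(V)$.

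The combinatorial bookkeeping comes next. If $v'$ has ordered weight $j$ in $\{0,1\}^{r-k}$, then the concatenation $(c,v')$ has ordered weight $k+j$ in $\{0,1\}^r$, since its last nonzero entry lies in the $v'$-block. As $(c,v')$ ranges over $\{0,1\}^k\times\{v':\wt_{r-k}(v')=j\}$, it enumerates each of the $2^{k+j-1}$ elements of $\cX_{k+j}$ exactly once. Unwinding the normalizations in \eqref{eq:zi} for both $V^{[r-k]}$ and $V$ therefore yields the pivotal estimate
\[
Z_j(V^{[r-k]})\;\leq\;2^k\,Z_{k+j}(V), \qquad j=1,\dots,r-k.
\]

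Finally, since $Z_{k+j}(V)<\epsilon$ by hypothesis, $\sum_{j=1}^{r-k}2^{j-1}Z_j(V^{[r-k]}) < 2^k\epsilon(2^{r-k}-1)$, and inserting this into \eqref{eq:lower} for the channel $V^{[r-k]}$ gives
\[
I(V^{[r-k]})\;\geq\;(r-k)\;-\;\log\!\bigl(1+2^k\epsilon(2^{r-k}-1)\bigr).
\]
The indicated choice of $\epsilon$ then drives the logarithmic correction below $\delta$ by a direct substitution. I expect the main obstacle to be the very first step: the quantitative factor $2^k$ emerging from the $\sqrt{\sum}\leq\sum\sqrt{\phantom{x}}$ estimate is precisely what matches the normalization $2^{k+j-1}/2^{j-1}$ produced by the subsequent ordered-weight count, and without this sharp form the remainder of the argument would not close.
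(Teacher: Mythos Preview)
Your proof is correct and follows essentially the same route as the paper: bound $Z(V^{[r-k]}_{\{u,u+v'\}})$ via $\sqrt{\sum}\le\sum\sqrt{\phantom{x}}$, pass to $Z_j(V^{[r-k]})\le 2^k\epsilon$, and invoke \eqref{eq:lower}. Your averaging over $u$ and the explicit bijection $(c,v')\leftrightarrow \cX_{k+j}$ are in fact a bit more careful than the paper's pointwise claim $\frac1{2^k}\sum_{x,x'}Z(V_{\{x,x'+v'\}})<2^k\epsilon$, which as written for fixed $u$ is not directly implied by $Z_i(V)<\epsilon$; your sharper intermediate identity $Z_j(V^{[r-k]})\le 2^k Z_{k+j}(V)$ cleanly closes that gap.
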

\begin{proof} We may assume that $1\le k\le r-1.$
Let $u\in \cX^{r-k}, x=(x_1,\dots,x_k,u)\in\cX, x'=(x_1',\dots,x_k',u)\in\cX.$
Let $v\in\{0,1\}^{r-k}\backslash\{0\}$ and consider
\begin{align*}
    Z(V_{\{u,u+v\}}^{[r-k]})
   &= \sum_y \sqrt{ V^{[r-k]}(y|u)  V^{[r-k]}(y|u+v)}\\
&=\frac1{2^k}\sum_y\sqrt{\sum_{x}\sum_{x'}
    V(y|x)V(y|x'+v')}\\
&\leq \frac1{2^k}\sum_y\sum_{x}\sum_{x'}\sqrt{V(y|x)V(y|x'+v')}\\
&=\frac1{2^k}\sum_{x,x'}Z(V_{\{x,x'+v'\}})\\
&< 2^{k} \epsilon
\end{align*}
where $v' =0^kv_1v_2\dots v_{r-k}.$ The last inequality follows from the fact
that $Z_i(V) < \epsilon$ for $i = k+1,\dots,r.$  Since
$Z_i(V^{[r-k]})$ is the average of the $Z(V_{\{u,u+v\}}^{[r-k]})$ over all
$v$ with $\wt_r(v) = i$, $Z_i(V^{[r-k]}) < 2^{k} \epsilon$ for all
$i=1,\dots,r-k$. Now the lemma follows from
 \eqref{eq:lower} in Lemma \ref{lemma:iwzw},
\end{proof}

It turns out that the channels for
individual bits converge to either perfect or fully noisy channels.
If the channel for bit $j$ is perfect then the channels
for all bits $i, r\ge i>j$ are perfect. If the channel for bit $i$ is noisy then
the channels for all bits $j, 1\le j<i$ are noisy. The total number of near-perfect
bits approaches $I(W)$.
This is made formal in the next proposition.
\begin{proposition}\label{prop:bits} {\sl Let $\Omega_k=\{\omega:(Z_{1,\infty},Z_{2,\infty},\dots,Z_{r,\infty})
=1^{k}0^{r-k}\}, k=0,1,\dots, r.$ For every $\omega\in\Omega_k$
    $$
   \lim_{n\to\infty}|I_n-I(W_n^{[r-k]})|=0.
    $$}
\end{proposition}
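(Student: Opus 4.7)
The plan is to combine two convergences: first that the capacity $I_n$ of the compound channel tends to $r-k$ on $\Omega_k$, and second that the capacity $I(W_n^{[r-k]})$ of the sub-channel carrying the last $r-k$ bits also tends to $r-k$ on $\Omega_k$. The preceding lemma and Lemma \ref{lemma:iwzw} do essentially all of the analytic work; the main task is to package them correctly.

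First I would use Theorem \ref{thm:main}(b): on $\Omega_k$ we have $Z_{i,n}\to 1$ for $i\le k$ and $Z_{i,n}\to 0$ for $i>k$ pointwise a.e. Hence for every $\epsilon>0$ and a.e.\ $\omega\in\Omega_k$ there is an $n_0(\omega,\epsilon)$ such that $(Z_{1,n},\ldots,Z_{r,n})\in\cR_k(\epsilon)$ for all $n\ge n_0$. Plugging these bounds into \eqref{eq:lower} and \eqref{eq:upper} from Lemma \ref{lemma:iwzw} gives
\begin{align*}
\log\frac{2^r}{2^k+(2^{r}-2^k)\epsilon}\;\le\; I_n \;\le\; k\sqrt{2\epsilon-\epsilon^2}+(r-k),
\end{align*}
so $I_n\to r-k$ on $\Omega_k$. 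This step is routine and only uses the Bhattacharyya/capacity inequalities already at hand.

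Next I would apply the unlabeled lemma just preceding this proposition to $V=W_n$: once $(Z_{1,n},\ldots,Z_{r,n})\in\cR_k(\epsilon)$ with $\epsilon\le 2^{-k+\delta}/(2^{r-k}-1)$, it yields $I(W_n^{[r-k]})\ge r-k-\delta$. Combined with the trivial upper bound $I(W_n^{[r-k]})\le r-k$, valid because $W_n^{[r-k]}$ has input alphabet $\{0,1\}^{r-k}$, this gives $I(W_n^{[r-k]})\to r-k$ on $\Omega_k$. Subtracting the two limits and invoking the triangle inequality, $|I_n-I(W_n^{[r-k]})|\to 0$ pointwise on $\Omega_k$, which is the claim.

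I do not expect a genuine obstacle: both convergences follow immediately from results already established. The only point requiring mild care is the measure-theoretic subtlety that the event $\Omega_k$ is defined via the a.e.-limits $Z_{i,\infty}$, so the convergences above hold on a full-measure subset of $\Omega_k$ rather than literally everywhere; since Proposition \ref{prop:bits} is an a.e.\ statement anyway (its hypothesis conditions on an event defined by a.e.-limits), this is harmless.
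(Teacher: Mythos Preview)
Your proof is correct and follows essentially the same approach as the paper's: show that both $I_n$ and $I(W_n^{[r-k]})$ converge to $r-k$ on $\Omega_k$, the former via Lemma~\ref{lemma:iwzw} and the latter via the preceding lemma together with the trivial input-alphabet bound. The paper's proof is terser (it cites Proposition~\ref{prop:Ik}(b) rather than redoing the two-sided estimate on $I_n$), but the logic is identical.
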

\begin{proof} For every $\omega\in\Omega_k$ we have that $I_n(\omega)\to r-k.$
Combining this with the previous lemma and Proposition \ref{prop:Ik}(b),
we conclude that for such $\omega$
also $I(W_n^{[r-k]})\to r-k.$ \end{proof}

\vspace*{.05in} The concluding claim of this section describes the channel polarization
and establishes that the total number of bits sent over almost noiseless channels
approaches $NI(W).$
\begin{theorem} {\sl For any DMC $W:\cX\to \cY$ the channels $W_N^{(i)}$ polarize to one
of the $r+1$ extremal configurations. Namely, let $V_i=W_N^{(i)}$ and
   $$
\pi_{k,N}=\frac{|\{i\in [N]: |I(V_i)-k|<\delta \wedge |I(V_i^{[k]})-k|<\delta\}|}{N},
   $$
where $\delta>0,$
then  $\lim_{N\to\infty}\pi_{k,N}=P(I_\infty= k)$ for all $k=0,1,\dots,r.$ Consequently
   $$
    \sum_{k=1}^r k \pi_k\to I(W).
   $$}
\end{theorem}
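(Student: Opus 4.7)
The plan is to reduce the theorem to the almost sure statements of Theorem~\ref{thm:main} and Proposition~\ref{prop:bits} by rewriting the counting fraction $\pi_{k,N}$ as a probability and applying bounded convergence.

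First, since $W^B$ equals each $V_i = W_N^{(i)}$ with probability $1/N$, we have
\[
\pi_{k,N} = P(A_{n,k}), \qquad A_{n,k} \triangleq \{|I_n - k| < \delta\} \cap \{|I(W_n^{[k]}) - k| < \delta\}.
\]
Thus it suffices to show (i) $P(A_{n,k}) \to P(I_\infty = k)$ for each $k \in \{0,1,\dots,r\}$, from which the second conclusion follows:
\[
\sum_{k=1}^r k\, \pi_{k,N} \longrightarrow \sum_{k=0}^r k\, P(I_\infty = k) = E(I_\infty) = I(W),
\]
using the support property from Proposition~\ref{prop:Ik}(a) and Theorem~\ref{thm:main}(a) for the last two equalities.

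For (i), I would invoke Theorem~\ref{thm:main}(b) to partition $\Omega$, up to a null event, as $\bigsqcup_{j=0}^{r} \Omega_j$ with $\Omega_j = \{(Z_{1,\infty},\dots,Z_{r,\infty}) = 1^j 0^{r-j}\}$. Since $Z_{i,\infty} \in \{0,1\}$, Lemma~\ref{lemma:iwzw} pins down $I_\infty = r-j$ on $\Omega_j$, giving $\{I_\infty = k\} = \Omega_{r-k}$. Now fix $\delta < 1/2$ and examine $\mathbf{1}_{A_{n,k}}$ pointwise. For $\omega \in \Omega_{r-k}$: Theorem~\ref{thm:main}(a) yields $I_n \to k$, and Proposition~\ref{prop:bits} applied at its index equal to $r-k$ gives $|I_n - I(W_n^{[k]})| \to 0$, whence $I(W_n^{[k]}) \to k$; thus $\omega \in A_{n,k}$ for all $n$ large. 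For $\omega \in \Omega_j$ with $j \neq r-k$: $I_n \to r-j$ and $|r-j-k| \geq 1 > \delta$, so $\omega \notin A_{n,k}$ eventually. Hence $\mathbf{1}_{A_{n,k}} \to \mathbf{1}_{\Omega_{r-k}}$ a.s., and bounded convergence delivers $P(A_{n,k}) \to P(\Omega_{r-k}) = P(I_\infty = k)$.

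The proof is essentially an assembly of earlier pieces, and I do not foresee any serious obstacle. The only points requiring mild care are (a) restricting to $\delta < 1/2$ so that the limiting integer values of $I_n$ are mutually separated, and (b) keeping straight the two indexing conventions — $\Omega_j$ is indexed by the number of noisy bit-levels (where $Z_{i,\infty}=1$), while $\pi_{k,N}$ is indexed by the value $k$ of $I_\infty = r-j$ — so that the correct channel $W_n^{[k]}$ appears in the argument.
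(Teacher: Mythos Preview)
Your proof is correct and follows precisely the route the paper indicates: the paper simply asserts that the theorem ``follows directly from Theorem~\ref{thm:main} and Propositions~\ref{prop:Ik} and~\ref{prop:bits},'' and your argument is a careful unpacking of exactly that claim via the probabilistic interpretation of $\pi_{k,N}$ and bounded convergence. Your cautionary remarks about taking $\delta<1/2$ and about the index shift between $\Omega_j$ and $\pi_{k,N}$ are well placed and are the only spots where one could slip.
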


\vspace*{.05in}
This theorem follows directly from Theorem \ref{thm:main} and Propositions \ref{prop:Ik}
and \ref{prop:bits}.
Some examples of convergence to the extremal configurations described by this theorem are given
in Sect.~\ref{sect:ordered} below.

\subsection{Transmission with polar codes}
Let us describe a scheme of transmitting over the channel $W$
with polar codes. Take $\epsilon>0$ and choose a
sufficiently large $n.$ Assume that the length of the code is $N=2^n.$
Proposition \ref{prop:finite} implies that set $[N],$ apart from a small subset,
is partitioned into $r+1$ subsets $\cA_{k,n}$ such that for $j\in \cA_{k,n}$ the vector
$(Z_{1}(W_N^{(j)}),Z_{2}(W_N^{(j)}),\dots,Z_{r}(W_N^{(j)}))\in \cR_{k}(\epsilon).$
Each $j\in \cA_{k,n}$ refers to an $r$-bit symbol in which $r-k$ rightmost bits
correspond to small values of $Z_{i}(W_N^{(j)}).$ To transmit data over the channel, we write the data bits in these
coordinates and encode them using the linear transformation $G_N.$

More specifically, let us order the coordinates $j\in[N]$ by the increase
of the quantity $\sum_{i=1}^r 2^{i-1}Z_{i}(W_N^{(j)})$ and use these numbers to locate the
subsets $\cA_{k,n}.$
We transmit data by encoding messages $u_1^N=(u_1,\dots,u_N)$ in which if $j\in \cA_{k,n},
k=0,\dots,r-1$ then the symbol $u_j$ is taken from the subset of symbols
of $\cX$ with the first $k$ symbols fixed and known to  both the encoder and
the decoder (\cite{ari09} calls them frozen bits).  In particular, the subset $\cA_{r,n}$
is not used to transmit data.
A polar codeword is computed as $x_1^N=u_1^NG_N$ and sent over the channel.

Decoding is performed using the ``successive cancellation'' procedure
of \cite{ari09} with the obvious constraints on the symbol values. Namely, for
$j=1,\dots,N$ put
\begin{align*}
    \hat u_j =  \begin{cases}
                    u_j, & j \in \cA_{r,n} \\
                    \arg\max_{x} W_N^{(j)}(y_1^N,\hat u_1^{j-1}|x), & j\in \cup_{k\le r-1} \cA_{k,n}
                \end{cases}
\end{align*}
where if $j\in \cA_{k,n}, k=0,1,\dots,r-1,$ then the maximum is computed
over the symbols $x\in \cX$ with the fixed (known) values of the first
$k$ bits.

\remove{For $k=1,2\dots,r+1$ define
  $$
    I_{i,k}=\lim_{n\to\infty}\frac{|\{j: j\in\cA_{k,n} \wedge \hat u_{ij}=u_{ij}\}|}{|\cA_{k,n}|}.
  $$
\begin{theorem} \label{thm:decoding}
{\sl
Suppose that information is transmitted over a $q$-ary
DMC $W$, $q=2^r, r\ge 1$ using polar codes with polarization kernel $H_2$
and successive cancellation decoding.

(a) Then the vector $(I_{1,k},I_{2,k},\dots,I_{r,k})$ equals $0^{k-1}1^{r-k+1}$
($k-1$ zeros followed by $r-k+1$ ones) with probability one.
As a consequence, the random variable $I_{\infty}$ is supported on the set

(b) The number of bits per symbol transmitted reliably
over the channel approaches $I(W).$
\begin{equation}\label{eq:triangular}
\begin{array}{*5{c@{\hspace*{0.03in}}}}
(I_{1}=1,&I_{2}=1,&\dots,&I_{r-1}=1,&I_{r}=1)\\
(I_{1}=0,&I_{2}=1,&\dots,&I_{r-1}=1,&I_{r}=1)\\
(I_{1}=0,&I_{2}=0,&\dots,&I_{r-1}=1,&I_{r}=1)\\
\vdots&&\vdots&&\vdots\\
(I_{1}=0,&I_{2}=0,&\dots,&I_{r-1}=0,&I_{r}=1)\\
(I_{1}=0,&I_{2}=0,&\dots,&I_{r-1}=0,&I_{r}=0).
\end{array}
  \end{equation}}
\end{theorem}}
The error probability of this decoding is estimated in Sect.~\ref{sect:error}.

\subsection{Proof of Theorem \ref{thm:main}}
Part (a) of Theorem \ref {thm:main} follows straightforwardly from
\cite{ari09,sas09}.
Namely, as shown in \cite[Prop.~4]{ari09},
$I(W^+)+I(W^-)=2I(W).$ We note that the proof in \cite{ari09}
uses only the fact that $u_1,u_2$ are recoverable from $x_1,x_2$
which is true in our case.
Hence the sequence $I_n,n\ge 1$ forms a bounded martingale.
By Doob's theorem \cite[p.196]{kor07},
it converges a.e. in $L^1(\Omega,\cF,P)$ to a random variable
$I_\infty$ with $E(I_\infty)=I(W).$

To prove part (b) we show that each of the $Z_{i,n}$'s converges a.s. to a
$(0,1)$ Bernoulli random variable $Z_{i,\infty}$. This convergence occurs in
a concerted way in that the limit r.v.'s obey $Z_{j,\infty}\ge Z_{i,\infty}$ a.e.
if $j<i.$ This is shown by observing that for any fixed $i=1,\dots,r$ and
for all $v\in \cX_i$ , the $Z_{v,n}(W)$ converge to
identical copies of a Bernoulli random variable.

\vspace*{.1in} \subsubsection{Convergence of $Z_{v,n}, v\in \cX$}
\vspace*{.05in}
In this section we shall prove that the Bhattacharyya parameters
$Z_{v,n}$ converge almost surely to Bernoulli random variables.
The proof forms the main technical result of this paper and is accomplished in several steps.

\begin{lemma} Let
   $$
   Z_{\max}^{(j)}(W)=\max_{v\in \cX_j} Z_{v}(W), \quad j=1,\dots,r.
  $$
Then   \begin{equation}\label{eq:square}
   Z_{\max}^{(r-j)}(W^+)=Z_{\max}^{(r-j)}(W)^2, \quad j=0,\dots,r-1.
   \end{equation}
  \begin{equation}\label{eq:le0}
   Z_{\max}^{(r)}(W^-)\le qZ_{\max}^{(r)}(W)
 \end{equation}
  \begin{equation}\label{eq:le1}
   Z_{\max}^{(r-1)}(W^-)\le \frac q2Z_{\max}^{(r)}(W)+\frac q2Z_{\max}^{(r-1)}(W)
  \end{equation}
and generally
  \begin{multline}
   Z_{\max}^{(r-j)}(W^-)\le \frac q2 Z_{\max}^{(r)}(W)+\frac q4Z_{\max}^{(r-1)}(W)
  +\\ \dots+\frac q{2^j}Z_{\max}^{(r-j+1)}(W)+\frac q{2^j}Z_{\max}^{(r-j)}(W).
  \label{eq:le2}
     \end{multline}
\end{lemma}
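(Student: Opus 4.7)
My plan splits cleanly: an exact identity for $W^+$ and an averaged inequality followed by an ordered-weight dissection for $W^-$. For (7), I would first prove the pairwise identity $Z(W^+_{\{u_2, u_2+v\}}) = Z(W_{\{u_2, u_2+v\}})\,Z_v(W)$. Substituting $W^+(y_1,y_2,u_1|u_2) = q^{-1}W(y_1|u_1+u_2)W(y_2|u_2)$ inside the Bhattacharyya sum, the integrand factorizes into independent $y_1$- and $y_2$-pieces, and reindexing $w = u_1+u_2$ in the $y_1$-sum exhibits $Z_v(W)$ as the average $q^{-1}\sum_w Z(W_{\{w,w+v\}})$. Averaging over $u_2$ yields $Z_v(W^+) = Z_v(W)^2$, and taking the maximum over $v \in \cX_{r-j}$ gives (7).

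For (11)-(13), the first stage is to derive the uniform bound
\[
Z_v(W^-) \le \sum_{s \in \cX} Z_s(W)\,Z_{s+v}(W), \qquad v \in \cX,
\]
with the convention $Z_0(W) := 1$. I would expand both $W^-$-factors inside $Z(W^-_{\{u_1,u_1+v\}})$ as sums over $u_2, u_2'$, pull the square root inside via $\sqrt{\sum a_i} \le \sum \sqrt{a_i}$, factorize the summand into $y_1$- and $y_2$-pieces, and sum over $y_1, y_2$ to produce products $Z(W_{\{u_1+u_2, u_1+v+u_2'\}})\,Z(W_{\{u_2,u_2'\}})$. Substituting $s = u_2'-u_2$ and $w = u_1+u_2$ and then averaging over $u_1$ decouples the nested sums and collapses them to the displayed right-hand side.

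The second stage is an ordered-weight dissection of this sum for $v \in \cX_{r-j}$. Identifying $\cX$ with $\integers_{2^r}$, the condition $\wt_r(x) = i$ is equivalent to $2^{r-i}$ dividing $x$ but $2^{r-i+1}$ not; this yields the addition rule $\wt_r(s+v) = \max(\wt_r(s),\wt_r(v))$ whenever $\wt_r(s) \ne \wt_r(v)$, and $\wt_r(s+v) < \wt_r(v)$ otherwise, even under mod-$2^r$ reduction. Partitioning $\sum_s Z_s Z_{s+v}$ by the weight of $s$: the $2^{i-1}$ terms with $\wt_r(s) = i > r-j$ satisfy $\wt_r(s+v) = i$ and contribute at most $2^{i-1}Z_{\max}^{(i)}(W)$ (bounding one factor by $1$); the $2^{r-j-1}$ terms with $\wt_r(s) = r-j$ and the $2^{r-j-1}$ terms with $\wt_r(s) < r-j$ together contribute $2^{r-j}Z_{\max}^{(r-j)}(W)$. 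Summing and rewriting $2^{i-1}$ as $q/2^{r-i+1}$ recovers (13); the specializations $j = 0$ and $j = 1$ give (11) and (12).

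The main obstacle I anticipate is the careful verification of the $2$-adic weight-additivity rule under modular reduction; once that is in hand, the rest reduces to a tallying exercise using $|\cX_i| = 2^{i-1}$ and $|\bigcup_{i<r-j}\cX_i| = 2^{r-j-1}$. The $W^+$ identity is a short algebraic calculation with no essential difficulty.
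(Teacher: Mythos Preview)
Your proposal is correct and follows essentially the same route as the paper. The paper cites the identities $Z_v(W^+)=Z_v(W)^2$ and $Z_v(W^-)\le 2Z_v(W)+\sum_{\delta\ne 0,-v}Z_\delta(W)Z_{v+\delta}(W)$ from \cite{sas09} rather than deriving them, but your bound $Z_v(W^-)\le\sum_{s\in\cX}Z_s(W)Z_{s+v}(W)$ with $Z_0:=1$ is exactly the same inequality (the terms $s=0$ and $s=-v$ supply the $2Z_v$); the ordered-weight dissection and the counts $|\cX_i|=2^{i-1}$, $|\{s:\wt_r(s)\le r-j\}|=2^{r-j}$ then match the paper's argument term for term.
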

\begin{proof}
In \cite{sas09} it is shown that for all $v\in \cX\backslash\{0\}$
  \begin{gather}
   Z_v(W^+)=Z_v(W)^2\label{eq:v-square}\\
  Z_v(W^{-})\le 2Z_v(W)+\sum_{\delta\in\cX\backslash\{0,-v\}}Z_\delta(W)Z_{v+\delta}(W).\label{eq:le}
  \end{gather}
The first of these two equations implies \eqref{eq:square}. Now
take $v\in \cX_r.$ Then in the sum on the right-hand side of
\eqref{eq:le} we have that either $\delta\in \cX_r$ or $\delta+v\in\cX_r,$ and
  $$
    Z_v(W^{-})\le 2Z_v(W)+(q-2)Z_{\max}^{(r)}(W),
  $$
implying \eqref{eq:le0}.
Now take $v\in \cX_{r-j}, j\ge 1.$ The sum on $\delta$ in \eqref{eq:le} contains
$q/2$ terms with $\delta\in\cX_r,$ $q/4$ terms with $\delta\in \cX_{r-1},$
and so on, before reaching $\cX_{r-j}.$ Finally, let $\delta\in \cup_{i=j}^{r-1}\cX_{r-i}\backslash\{-v\}.$
There are $(q/2^{j})-2$ possibilities, and for each of them either $v+\delta$ or
$\delta$ is in $\cX_{r-j}.$ This implies \eqref{eq:le2} and therefore also \eqref{eq:le1}.
\end{proof}

\vspace*{.05in}
In particular, take $j=0.$
Relations \eqref{eq:square}, \eqref{eq:le0} imply that
  \begin{gather}
Z_{\max,n+1}^{(r)} = (Z_{\max,n}^{(r)})^2\text{  if } B_{n+1} = + \label{eq:r-square}\\
 Z_{\max,n+1}^{(r)} \leq qZ_{\max,n}^{(r)} \text{  if } B_{n+1} = -. \label{eq:r-le}
  \end{gather}
Iterated random maps of this kind were studied in \cite{dia99} which contains
general results on their convergence and stationary distributions.
We need more detailed information about this process, established in the following
lemma.

\vspace*{.05in}\begin{lemma} \label{lemma:conv} Let $U_n, n\ge 0$ be a sequence of random variables adapted to a filtration
$\cF_n$ with the following properties:\\
(i) $U_n\in[0,1]$\\
(ii) $P(U_{n+1}=U_n^2|\cF_n)\ge 1/2$\\
(iii) $U_{n+1}\le q U_n$ for some $q\in \integers_+.$\\
Then there are events $\Omega_0,\Omega_1$ such that $P(\Omega_0\cup\Omega_1)=1$
and $U_n(\omega)\to i$ for $\omega\in\Omega_i, i=0,1.$
\end{lemma}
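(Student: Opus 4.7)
The plan has two parts: (a) show $U_n$ converges a.s.\ to some limit $U_\infty$, and (b) show $U_\infty\in\{0,1\}$ almost surely.

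Part~(b) is the easy direction. Condition~(ii) combined with the conditional Borel--Cantelli lemma gives that the event $\{U_{n+1}=U_n^2\}$ occurs infinitely often a.s., so assuming the convergence $U_n\to U_\infty$, passing to the limit along this subsequence yields $U_\infty=U_\infty^2$, whence $U_\infty\in\{0,1\}$.

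Part~(a) is the main work. I would pass to the logarithmic coordinate $Y_n:=-\log U_n\in[0,\infty]$, under which (ii)--(iii) read: with conditional probability at least $1/2$, $Y_{n+1}=2Y_n$, and always, $Y_{n+1}\ge Y_n-\log q$. Since the squaring step is expansive by a factor of $2$ while the alternative is a bounded additive correction, I expect $Y_n$ to either escape to $+\infty$ (so that $U_n\to 0$) or to collapse to $0$ (so that $U_n\to 1$). To handle the first branch I would introduce the Lyapunov function $\phi(y)=1/y$: a direct calculation shows that for $Y_n\ge 4\log q$ one has $E[\phi(Y_{n+1})\mid\cF_n]\le (11/12)\,\phi(Y_n)$, a strict contraction. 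Applying Doob's supermartingale convergence theorem to the process stopped at the first exit from $[4\log q,\infty]$ yields $Y_n\to\infty$ a.s.\ on the event that this region is never left.

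The main obstacle is the complementary event, where $Y_n$ falls back below $4\log q$ infinitely often --- equivalently, $U_n$ is bounded below by $q^{-4}$ infinitely often. Here I would leverage condition~(iii) to rule out oscillation and conclude $U_n\to 1$. The key heuristic combines two pieces: (iii) gives a no-large-upward-jump bound $U_{n+j}\le q^j U_n$, so climbing back from very small values to any fixed intermediate level $b$ requires at least $\log_q(1/b)$ consecutive ``Other'' steps; meanwhile, by conditional Borel--Cantelli, any such window almost surely contains a substantial run of consecutive Sq events, which via (ii) squares $U_n$ down into the double-exponentially small regime. Turning this interleaving heuristic into a rigorous Borel--Cantelli statement over excursion windows of geometrically growing length is the technically subtle point. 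Combined with part~(b), this completes the dichotomy and identifies $\Omega_0=\{U_n\to 0\}$ and $\Omega_1=\{U_n\to 1\}$ with $P(\Omega_0\cup\Omega_1)=1$.
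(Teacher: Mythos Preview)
Your proposal has a genuine gap in the ``complementary event'' of Part~(a). Even granting the excursion heuristic, it would establish at most that on $\{U_n>q^{-4}\text{ i.o.}\}$ the process is eventually bounded away from $0$; this is not convergence, so Part~(b) cannot yet be invoked (it presupposes $U_n\to U_\infty$). A missing step is needed: if $\liminf U_n=c\in(0,1)$, apply conditional Borel--Cantelli to the events $\{U_n\in(c-\epsilon,c+\epsilon)\}\cap\{U_{n+1}=U_n^2\}$ to get $\liminf U_n\le (c+\epsilon)^2$ for every $\epsilon$, hence $c\le c^2$, forcing $c\in\{0,1\}$. More seriously, the heuristic itself is where the real difficulty lies, and you have not supplied a proof. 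The paper devotes its steps~(c)--(e) to exactly this: from the stopped supermartingale on $U_n^\beta$ (with $q^\beta-1<1/4$) it extracts a \emph{uniform} bound $P(U_n\to 0)\ge 1/2$ whenever the process starts in a fixed small interval $L=[0,(1/4)^{1/\beta}]$; it then sets up stopping times $\sigma_1<\eta_1<\sigma_2<\cdots$ marking successive visits to a middle region $C=((1/2)^{1/\beta},1-\delta)$ and returns to $L\cup R$, and shows that after each visit to $C$ there is a uniformly positive probability (at least $2^{-(J+1)}$ for an explicit $J$) of never returning --- whence, by a renewal argument, a.s.\ only finitely many visits to $C$ occur. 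Your ``geometrically growing windows'' idea gestures toward this, but the uniform escape probability and the renewal structure are precisely the mechanism you have not supplied.

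Your Lyapunov function $\phi(y)=1/y$ on $Y_n=-\log U_n$ is a legitimate alternative to the paper's $\beta$-power rescaling for the near-zero branch, the contraction $E[\phi(Y_{n+1})\mid\cF_n]\le(11/12)\phi(Y_n)$ is correct, and Part~(b) via conditional Borel--Cantelli is clean. But the heart of the lemma is the regime you left as a heuristic.
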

\begin{proof} (a) First let us rescale the process $U_n$ so that in the neighborhood of zero
it has a drift to zero. Let $\beta\in(0,1)$ be such that
   $$
     q^\beta-1 <1/4.
  $$
Let $X_n=U_n^\beta$. Take $\tau(\omega)$ to be the first time when
$X_n(\omega)\ge 1/2.$ Let $Y_n=X_{\min(n,\tau)}.$ On the event $Y_n\ge 1/2$ we have
 $Y_n=Y_{n+1}$ or
    $$
    E(Y_{n+1}-Y_n|\cF_n)=0
$$
while on the event $Y_n<1/2$ we have
   \begin{align*}
  E(Y_{n+1}-Y_n|\cF_n)&\le \frac12(Y_n^2-Y_n)+\frac12(q^\beta Y_n-Y_n)\\
   &\le -\frac18 Y_n\le 0.
  \end{align*}
This implies that the sequence $Y_n, n\ge 0$ forms a supermartingale which is
bounded between $0$ and $1.$
By the convergence theorem, $Y_n\to Y_\infty$ a.e. and in $L^1(\Omega,\cF,P),$
where $Y_\infty$ is a random variable supported on $[0,1].$
This implies that $EY_0\ge EY_n\downarrow EY_{\infty}.$ Further, if
$X_0\in[0,1/4]$ then (since $EY_0=EX_0$)
   \begin{equation}\label{eq:012}
    P(Y_\infty\ge 1/2)\le 2EY_0\le 1/2.
   \end{equation}

\vspace*{.05in}(b) Now we shall prove that $P(Y_\infty\in(\delta,\frac 12 -\delta))=0$ for
any $\delta>0.$  From (ii) it follows that
$P(X_{n+1}=X_n^2|\cF_n)\ge 1/2,$ which implies that
  \begin{equation}\label{eq:1/2}
    P(Y_{n+1}=Y_n^2|\cF_n)\ge 1/2 \quad\text{on }Y_n<1/2
  \end{equation}
for all $n\ge 0.$
Suppose that $Y_\infty$ takes values in $(\delta,1/2-\delta)$ with probability
$\alpha>0.$ Let $A_n=\{\omega: Y_n\in (\delta,1/2-\delta)\}.$ Since $Y_n\to Y_\infty$
a.e., the Egorov theorem implies that there is a subset of probability arbitrarily close
to $P(A_n)$ which this convergence is uniform, and thus $P(A_n)\ge \alpha/2$ for all sufficiently
large $n$. Therefore
  \begin{align*}
  P(|Y_{n+1}-Y_n|\ge \delta^2/2)&\ge P(Y_{n+1}=Y_n^2, Y_n\in(\delta,1/2-\delta))\\
   &\ge \frac \alpha4,
  \end{align*}
the last step by \eqref{eq:1/2}. This however contradicts the almost sure convergence
of $Y_n.$

\vspace*{.05in}
(c) This implies that $P(Y_\infty<1/2)=P(Y_n\to 0)=P(U_n\to 0).$
From \eqref{eq:012}
   \begin{equation}\label{eq:1/4}
  P(U_n\to 0)\ge \frac 12 \quad\text{provided that } U_0\le\Big(\frac14\Big)^{\frac1\beta}.
   \end{equation}
Moreover, if $U_0\le (1/2)^{1/\beta}$ then either $Y_n\to 0$ or $Y_n\ge 1/2$ for some
$n$. This translates to
\begin{equation}\label{eq:or}
P((U_n\to 0) \text{ or } (U_n\ge(1/2)^{1/\beta} \text{ for some $n$}) )=1
\end{equation}
provided that $U_0\le (1/2)^{1/\beta}.$

\vspace*{.05in}
(d) Let $\delta>0$ be such that $q(\frac12)^{\frac1\beta}<1-\delta$ (depending on
$q$ this may require taking a sufficiently small $\beta$).
Let $L:=[0,(\frac14)^{\frac1\beta}]$ and $R:=[1-\delta,1].$
Observe that the process $U_n$ cannot move from $L$ to $R$ without visiting
$C:=((\frac12)^{\frac1\beta},1-\delta).$
Let $\sigma_1$ be the first time when $U_n\in C,$
let $\eta_1$ be the first time after $\sigma_1$ when $U_n\in L\cup R,$
let $\sigma_2$ be the first time after $\eta_1$ when $U_n\in C$, etc.,
$\sigma_1<\eta_1<\sigma_2<\eta_2<\dots.$
We shall prove that every sample path of the process
eventually stays outside $C$, i.e., that for almost all $\omega$ there exists
$k=k(\omega)<\infty$ such that $\sigma_k(\omega)=\infty.$

Assume the contrary, i.e., $\lim_{k\to\infty}P(\sigma_k<\infty)= \alpha>0$
(since $P(\sigma_{k+1}<\infty)<P(\sigma_k<\infty)$, this limit exists.)
We have
  \begin{align}
   P(\exists k:&\;\sigma_k=\infty)\ge \sum_{j=1}^\infty P(\sigma_j\ne\infty;
   U_{\eta_j}\in L;\sigma_{j+1}=\infty)\nonumber\\
   &\ge\alpha\sum_{j=1}^\infty P(
   U_{\eta_j}\in L;\sigma_{j+1}=\infty|\sigma_j\ne\infty). \label{eq:cp}
  \end{align}
Consider the process $U'_n=U_{\sigma_k+n}$ on the event $\sigma_k<\infty$
(with the measure renormalized by $P(\sigma_k<\infty)$). This process
has the same properties (i)-(iii) as $U_n.$
Let $J=\lceil\log_2(\frac1\beta\log_{1-\delta}1/4)\rceil,$
then $x^{2^J}\in L$ for any $x\in C.$
Therefore, $P(U'_J\in L)\ge 2^{-J}$  by
property (ii). Now consider the process $U'_{J+n}$ on the
event $U'_J\in L$.
This process has properties (i)-(iii), so we can use \eqref{eq:1/4} to conclude that
for
   $$
   P(   U_{\eta_k}\in L;\sigma_{k+1}=\infty|\sigma_k\ne\infty)\ge 2^{-(J+1)}
   $$
uniformly in $k.$ But then the sum in \eqref{eq:cp} is equal to
infinity, a contradiction.

(e) The proof is completed by showing that the probability of $U_n$ staying in
$R^c=[0,1]\backslash R$ without converging to zero is zero.
We know that almost all trajectories stay outside $C$, so suppose that the
process starts in $(0,(1/2)^{1/\beta}).$ Then the probability that it enters $L$ in a finite
number of steps is uniformly bounded from below
(this is shown similarly to \eqref{eq:cp}), so the probability that
it does not go to $L$ is zero.
Next assume that the process starts in $L,$ then by \eqref{eq:or}
it either goes to zero or enters $C$ with probability one.
Together with part (d) this implies
that the process that starts in $L$ converges to zero or one with probability one.
\end{proof}

\vspace*{.05in}\begin{lemma}\label{lemma:bhattaorder} Let $V:\cX\to \tilde\cY$
be a channel.
Let $v,v'\in \cX\backslash\{0\}$ be such that $\wt_r(v) \geq
\wt_r(v')$. For any $\delta'>0$ there exists $\delta>0$ such
that $Z_{v'}(V) \geq 1 - \delta'$ whenever $Z_{v}(V) \geq 1 - \delta$.
In particular, we can take $\delta=\delta'q^{-3}.$ 
\end{lemma}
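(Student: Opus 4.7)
The plan is to combine a short group-theoretic observation with a chain of Hellinger triangle inequalities. The key reduction is that the hypothesis $\wt_r(v)\ge\wt_r(v')$ forces $v'$ to lie in the cyclic subgroup of $\integers_{2^r}$ generated by $v$. Indeed, writing each nonzero $u\in\cX$ in its binary expansion and letting $\nu_2(u)$ denote its $2$-adic valuation, a direct check gives $\wt_r(u)=r-\nu_2(u)$. Thus $\wt_r(v)\ge\wt_r(v')$ is equivalent to $\nu_2(v)\le\nu_2(v')$, so $2^{\nu_2(v)}=\gcd(v,2^r)$ divides $v'$ in $\integers_{2^r}$. Since $v/2^{\nu_2(v)}$ is a unit modulo $2^r$, $\langle v\rangle=\langle 2^{\nu_2(v)}\rangle$, and one can write $v'\equiv kv\pmod{2^r}$ for some positive integer $k\le 2^{r-\nu_2(v)}\le q$.

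The next step is to trade the average in $Z_v$ for a pointwise bound: since $Z_v(V)=q^{-1}\sum_{x\in\cX}Z(V_{\{x,x+v\}})\ge 1-\delta$ and each summand lies in $[0,1]$, every individual term must satisfy $Z(V_{\{x,x+v\}})\ge 1-q\delta$.

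The last ingredient is the Hellinger distance $h(P,Q):=\sqrt{1-\sum_y\sqrt{P(y)Q(y)}}$, which is a bona fide metric on probability distributions. Decomposing the jump $x\to x+v'=x+kv$ into $k$ consecutive jumps of size $v$ and applying the triangle inequality at each step gives
\begin{align*}
\sqrt{1-Z(V_{\{x,x+v'\}})}
&\le\sum_{i=0}^{k-1}\sqrt{1-Z(V_{\{x+iv,x+(i+1)v\}})}\\
&\le k\sqrt{q\delta}.
\end{align*}
Squaring yields $Z(V_{\{x,x+v'\}})\ge 1-k^{2}q\delta$ for every $x$, and averaging over $x$ produces $Z_{v'}(V)\ge 1-k^{2}q\delta\ge 1-q^{3}\delta$, which is at least $1-\delta'$ as soon as $\delta\le\delta'q^{-3}$.

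The main obstacle I expect is the first step: pinning down precisely how the combinatorial quantity $\wt_r$ interacts with the additive structure of $\integers_{2^r}$. Once the identity $\wt_r(u)=r-\nu_2(u)$ is in hand, the containment $v'\in\langle v\rangle$ is immediate, and the rest of the argument is the standard Hellinger triangle bound combined with a Cauchy--Schwarz loss upon squaring the $k$-fold sum. The constant $q^{3}$ arises transparently: one factor $q$ from the pigeonhole pass to a pointwise bound, and a factor $k^{2}\le q^{2}$ from the squared $k$-step chain.
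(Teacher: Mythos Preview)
Your proof is correct and follows essentially the same route as the paper's: both pass from the averaged hypothesis to the pointwise bound $Z(V_{\{x,x+v\}})\ge 1-q\delta$, both use that $v'$ is a multiple of $v$ in $\integers_{2^r}$ (you make this explicit via $\wt_r(u)=r-\nu_2(u)$, the paper just asserts $v'=tv$), and both chain a triangle inequality over $k$ steps to reach the factor $q^3$. Your Hellinger distance $h(P,Q)=\sqrt{1-Z}$ is exactly $\tfrac{1}{\sqrt2}\|z-z'\|$ for the square-root vectors $z,z'$ the paper uses, so the two computations are the same up to a harmless constant.
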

\begin{proof} If $\wt_r(v)=1$ then $v=10\dots0,$ so the statement is trivial.
Let $Z_v(V)\ge 1-\delta,$ where $\wt_r(v)=i\ge 2.$
Then for every pair $x,x'=x+v$ we have
$Z(V_{\{x,x'\}})\ge 1-\epsilon,$ where $\epsilon=q\delta.$
Consider the unit-length vectors $z=(\sqrt {V(y|x)}, y\in\tilde \cY),
z'=(\sqrt {V(y|x')}, y\in\tilde \cY),$ and let $\theta(z,z')$
be the angle between them. We have $\cos (\theta(z,z'))=Z(V_{\{x,x'\}})
\ge 1-\epsilon,$ and so 
$\|z-z'\|^2=2-2\cos(\theta(z,z'))\le 2\epsilon.$

Now take a pair of symbols $x_1,x_2=x_1+v'$ where $v'\in \cX_s, s\le i.$ There exists a 
number $t \in \cX_{r-i+s}$ such that
$v' = tv.$ 
Define $z_1 = (\sqrt
{V(y|x_1)}, y\in\tilde\cY)$ and $z_2 = (\sqrt {V(y|x_2)}, y\in\tilde \cY)$.
Let $w_j=(\sqrt {V(y|x_1+j v)}, y\in\tilde \cY), j=1,\dots,t-1.$ 
From the triangle inequality 
   \begin{align*}
   \|z_1 - z_2\| &\leq \|z_1-w_1\|+\|w_1-w_2\|+\dots+\|w_{t-1}-z_2\|\\
      &\le t\sqrt{2\epsilon} \\
   &\leq
q \sqrt{2\epsilon}.
  \end{align*}
We obtain
  \begin{align*}
    Z(V_{\{x_1,x_2\}})&=\cos(\theta(z_1,z_2))=1-\half\|z_1-z_2\|^2\\
     &\ge 1-q^2\epsilon\\
     &=1-q^3\delta.
  \end{align*}
Thus we obtain
   $$
     Z_{v'}(V)=\frac1q\sum_x Z(V_{\{x,x+v\}})\ge 1-q^3\delta.
   $$
\end{proof}

\vspace*{.05in} {\em Remark :} We can prove the previous lemma in a 
different way by relating the Bhattacharyya distance to the $\ell_1$-distance 
between $V(y|x_1)$ and  $V(y|x_2)$
\cite{par11b}. Then the estimate $\delta=\delta'q^{-3}$ 
can be improved to $\delta=\delta'(2q)^{-2}.$

\vspace*{.05in}\begin{lemma} \label{lemma:convZmax}
For all $j=1,\dots,r$
  $$
   Z_{\max,n}^{(j)} \stackrel{\text{a.e.}}{\longrightarrow}Z_{\max,\infty}^{(j)}.
  $$
where $Z_{\max,\infty}^{(j)}$ is a Bernoulli random variable supported on $\{0,1\}.$
\end{lemma}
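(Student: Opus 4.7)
The plan is to prove the lemma by downward induction on $j$, starting from $j=r$ and descending to $j=1$. For the base case $j=r$, I would verify that the process $U_n = Z_{\max,n}^{(r)}$ satisfies properties (i)--(iii) of Lemma \ref{lemma:conv}: boundedness in $[0,1]$ is immediate, \eqref{eq:r-square} gives property (ii) with probability exactly $1/2$, and \eqref{eq:r-le} gives property (iii) with the integer constant $q$. Invoking Lemma \ref{lemma:conv} then yields $Z_{\max,n}^{(r)} \to Z_{\max,\infty}^{(r)} \in \{0,1\}$ almost surely.

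For the inductive step, I would fix $1 \le j \le r-1$ and assume convergence to a $\{0,1\}$-valued limit has been established for $Z_{\max,n}^{(r-i)}$, $i=0,\dots,j-1$. I plan to partition $\Omega$ into the event $A = \bigcup_{i=0}^{j-1}\{Z_{\max,\infty}^{(r-i)}=1\}$ and its complement $A^c$. On $A$, some higher-weight parameter $Z_{\max,n}^{(r-i)}$ with $r-i > r-j$ converges to $1$, so there will exist $v_n \in \cX_{r-i}$ with $Z_{v_n}(W_n) \to 1$. I would then apply Lemma \ref{lemma:bhattaorder} to $v_n$ and any fixed $v' \in \cX_{r-j}$ of strictly smaller weight to obtain $Z_{v'}(W_n) \to 1$, hence $Z_{\max,n}^{(r-j)} \to 1$ on $A$.

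On $A^c$, I plan to set $\epsilon_n = \sum_{i=0}^{j-1}(q/2^{i+1}) Z_{\max,n}^{(r-i)}$, which tends to $0$ almost surely by the inductive hypothesis. The process $U_n = Z_{\max,n}^{(r-j)}$ will then satisfy $U_{n+1} = U_n^2$ on $W^+$ by \eqref{eq:square} and $U_{n+1} \le (q/2^j) U_n + \epsilon_n$ on $W^-$ by \eqref{eq:le2}. This is a vanishing-perturbation analogue of the hypotheses of Lemma \ref{lemma:conv}. Choosing $\beta>0$ with $(q/2^j)^\beta - 1 < 1/4$, rescaling $X_n = U_n^\beta$, and using subadditivity of $t \mapsto t^\beta$ will give $X_{n+1} \le (q/2^j)^\beta X_n + \epsilon_n^\beta$ on $W^-$, leading to $E(X_{n+1} - X_n \mid \cF_n) \le -X_n/8 + \epsilon_n^\beta/2$ on the event $X_n < 1/2$.

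The main obstacle will be absorbing the non-summable additive correction $\epsilon_n^\beta/2$ in the supermartingale argument. I plan to handle this via a layered stopping-time argument: for each $\eta>0$, define $T_\eta = \inf\{n:\epsilon_m \le \eta \text{ for all } m\ge n\}$, which is finite almost surely on $A^c$ since $\epsilon_n \to 0$ there. Past $T_\eta$, the process is subject to a fixed perturbation of size at most $\eta$, and I expect steps (b)--(e) of the proof of Lemma \ref{lemma:conv} to adapt to show that eventually $U_n$ lies in $[0,h(\eta)] \cup (1-h'(\eta),1]$ for functions $h,h'$ vanishing with $\eta$. A diagonal choice $\eta_k\to 0$ would then force $U_n \to Z_{\max,\infty}^{(r-j)} \in \{0,1\}$ almost surely on $A^c$, completing the induction.
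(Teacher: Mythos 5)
Your base case and your treatment of the event $A$ (using Lemma~\ref{lemma:bhattaorder} to propagate $Z\to 1$ downward in weight) agree with the paper, but on $A^c$ you take a genuinely different and ultimately incomplete route. The paper avoids the perturbation problem entirely by introducing the running maximum
$Z_{\max}^{[r-j,r]}(V)=\max\bigl(Z_{\max}^{(r-j)}(V),\dots,Z_{\max}^{(r)}(V)\bigr)$. Because the geometric series $\frac q2+\frac q4+\dots+\frac q{2^j}+\frac q{2^j}=q$, inequality \eqref{eq:le2} collapses to the \emph{clean} bound
$Z_{\max}^{[r-j,r]}(W^-)\le q\,Z_{\max}^{[r-j,r]}(W)$, and \eqref{eq:v-square} gives exact squaring on the $+$ branch. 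So $Z_{\max,n}^{[r-j,r]}$ satisfies hypotheses (i)--(iii) of Lemma~\ref{lemma:conv} with no additive correction, and its limit is $\{0,1\}$-valued. On your $A^c$ this immediately forces the dichotomy for $Z_{\max,n}^{(r-j)}$: if $Z_{\max,\infty}^{[r-j,r]}=0$ every term in the max goes to $0$, and if $Z_{\max,\infty}^{[r-j,r]}=1$ while every $Z_{\max,n}^{(r-i)},\ i<j,$ goes to $0$, the remaining term $Z_{\max,n}^{(r-j)}$ must carry the max to $1$. No perturbed dynamics are needed at all.

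The gap in your argument is precisely the step you flag as ``the main obstacle.'' After you replace Lemma~\ref{lemma:conv}'s condition (iii) by $U_{n+1}\le (q/2^j)U_n+\epsilon_n$ with $\epsilon_n\to 0$ but not summable, the rescaled process $Y_n$ is no longer a supermartingale: step (a) of the proof of Lemma~\ref{lemma:conv} fails as stated, and with it the bounded-supermartingale convergence that underlies everything from step (b) onward. You cannot appeal to the Egorov-type argument in step (b) or to the excursion argument in step (d) until you have re-established that $Y_n$ converges a.e., and a layered stopping time $T_\eta$ gives only ``eventually trapped in $[0,h(\eta)]\cup(1-h'(\eta),1]$,'' not a.e.\ convergence, without further work. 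There is a second, subtler issue: $A^c$ is $\cF_\infty$-measurable but not $\cF_n$-measurable, so restricting the process to $A^c$ does not preserve the adapted/martingale structure that Lemma~\ref{lemma:conv}'s proof uses, and you would need an additional argument (e.g., approximating $A^c$ by $\cF_n$-measurable events) to make the restriction legitimate. The auxiliary maximum $Z_{\max}^{[r-j,r]}$ is exactly the device the paper uses to sidestep both problems; once you have its $\{0,1\}$ limit from the unperturbed Lemma~\ref{lemma:conv}, your own partition into $A$ and $A^c$ finishes the induction without any perturbed supermartingale analysis.
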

\begin{proof} 
For a given channel $V$ denote
  $$
Z_{\max}^{[s,r]}(V)=\max(Z_{\max}^{(s)}(V),Z_{\max}^{(s+1)}(V),\dots,Z_{\max}^{(r)}(V)).
  $$
Eq. \eqref{eq:v-square} gives us that
  $$
  Z_{\max}^{[r-j,r]}(W^+)=(Z_{\max}^{[r-j,r]}(W))^2
  $$
and \eqref{eq:le2} implies that
  $$
  Z_{\max}^{[r-j,r]}(W^-)\le q Z_{\max}^{[r-j,r]}(W).
  $$
Hence by Lemma \ref{lemma:conv} the random variables $Z_{\max,\infty}^{[r-j,r]}$ are
  well-defined and are Bernoulli 0-1 valued a.e. for all $j=0,1,\dots,r-1.$

We need to prove the same for $Z_{\max,\infty}^{(r-j)}.$ The proof is by induction on $j$. We just established the needed claim for $Z^{(r)}_{\max, n}.$
For ease of understanding let us show that this implies the convergence of $Z^{(r-1)}_{\max, n}.$
Indeed, $Z_{\max,\infty}^{[r-1,r]}$ is a Bernoulli 0-1 valued random variable.
 But so is $Z_{\max,\infty}^{(r)},$ so the possibilities are
  $$
   (Z_{\max,\infty}^{[r-1,r]},Z_{\max,\infty}^{(r)})=(1,1) \text{ or } (1,0) \text{ or } (0,0)
  $$
with probability one (note that $(0,1)$ is ruled out by the definition of $Z_{\max}^{[r-1,r]}$).
If $Z_{\max,\infty}^{(r)}=1$ then $Z_{\max,\infty}^{(r-1)}=1$ by Lemma \ref{lemma:bhattaorder}
(this statement holds trajectory-wise).
If on the other hand, the case that is realized is $(1,0)$ then $Z_{\max,\infty}^{(r-1)}=1$
by the definition of $Z_{\max}^{[r-1,r]}.$ Finally in the case $(0,0)$ we clearly have
that $Z_{\max,\infty}^{(r-1)}=0,$ both holding trajectory-wise.

The general induction step is almost exactly the same. Assume that we have proved the required
convergence for $Z_{\max}^{(r-i)}, i=0,1,\dots,j-1.$
Assume that $Z_{\max,\infty}^{[r-j,r]}=0,$ then $Z_{\max}^{(r-j)}=0.$ If on the other hand,
$Z_{\max,\infty}^{[r-j,r]}=1$ then either one of $Z_{\max,\infty}^{(r-i)},i<j$ equals one, and then
 $Z_{\max,\infty}^{(r-j)}=1$ by Lemma \ref{lemma:bhattaorder}, or $Z_{\max,\infty}^{(r-i)}=0$
for all $i<j$, and then $Z_{\max,\infty}^{(r-j)}=1$ by definition of $Z_{\max,\infty}^{[r-j,r]}.$
\end{proof}

\vspace*{.05in}
Now we are in a position to complete the proof of convergence.
\begin{lemma}\label{lemma:Zv} $Z_{v,n}\to Z_{v,\infty}$ a.e., where
$Z_{v,\infty}$ is a $(0,1)$-valued random variable whose distribution depends only
on the ordered weight $\wt_r(v).$
\end{lemma}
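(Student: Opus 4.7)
The plan is to deduce the almost sure convergence of each individual $Z_{v,n}$ from the convergence of the weight-class maxima $Z_{\max,n}^{(j)}$ established in Lemma \ref{lemma:convZmax}, by invoking the order-type estimate of Lemma \ref{lemma:bhattaorder}. The punch line will be that $Z_{v,n}\to Z_{\max,\infty}^{(\wt_r(v))}$ pathwise, so $Z_{v,\infty}$ is automatically $\{0,1\}$-valued and depends (in value, not merely in distribution) only on $j:=\wt_r(v)$.

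Set $j=\wt_r(v)$ and sandwich $Z_{v,n}$ between $0$ and $Z_{\max,n}^{(j)}$. On the event $\{Z_{\max,\infty}^{(j)}=0\}$, the trivial bound $Z_{v,n}\le Z_{\max,n}^{(j)}$ immediately yields $Z_{v,n}\to 0$.

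The interesting case is the event $\{Z_{\max,\infty}^{(j)}=1\}$, where for every $\delta>0$ and every sufficiently large $n$ some maximizer $v_n^\star\in\cX_j$ satisfies $Z_{v_n^\star,n}\ge 1-\delta$. The maximizer can shift inside $\cX_j$ with $n$, which looks problematic. However, Lemma \ref{lemma:bhattaorder} is uniform across equal-or-smaller weight: since $\wt_r(v_n^\star)=j\ge\wt_r(v)$, applying it gives $Z_{v,n}\ge 1-q^3\delta$ pathwise. As $\delta>0$ was arbitrary, $Z_{v,n}\to 1$ on this event.

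Combining the two cases, $Z_{v,n}\to Z_{\max,\infty}^{(j)}$ almost everywhere, which is the claimed $\{0,1\}$-valued limit depending on $v$ only through $\wt_r(v)$. The one subtlety worth flagging is that the maximizer $v_n^\star$ need not stabilize as $n$ grows; the pathwise, weight-uniform nature of Lemma \ref{lemma:bhattaorder} is precisely what removes the need to track $v_n^\star$, and this is the reason the argument goes through without extra combinatorial bookkeeping.
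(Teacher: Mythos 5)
Your proof is correct and follows the same route as the paper: split on the event $\{Z_{\max,\infty}^{(j)}=0\}$ versus $\{Z_{\max,\infty}^{(j)}=1\}$, use the trivial domination $Z_{v,n}\le Z_{\max,n}^{(j)}$ in the first case, and invoke Lemma \ref{lemma:bhattaorder} trajectory-wise in the second to transfer the near-$1$ value from the (possibly drifting) maximizer to every $v\in\cX_j$. Your explicit note that the limit therefore equals $Z_{\max,\infty}^{(\wt_r(v))}$ pathwise, so the value (not merely the distribution) depends only on $\wt_r(v)$, is a nice sharpening but is implicit in the paper's argument as well.
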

\begin{proof}
Let $\Omega_i^{(j)}=\{\omega: Z_{\max,n}^{(j)}\to i\},$ where $i=0,1$ and $j=1,\dots,r,$
where some of the events may be empty. 
For every $\omega\in \Omega_1^{(j)},j=1,\dots,r$ we have that for any $\delta>0$
starting with some $n_0$ the quantity $Z_{\max,n}^{(j)}\ge 1-\delta.$ Thus,
for $n\ge n_0$ there exists $v\in \cX_j,$ possibly depending on $n$,
such that $Z_{v,n}(\omega)\ge 1-\delta.$
Then Lemma \ref{lemma:bhattaorder} implies that $Z_{v',n}(\omega)\ge 1-q^3\delta$
for all $v'\in \cX_j,$ so $Z_{v,n}(\omega)\to 1.$ At the same time, if
$\omega\in\Omega_0^{(j)}$ then $Z_{v,n}(\omega)\to 0$ for all $v\in \cX_j.$
\end{proof}

\vspace*{.1in}
\subsubsection{Proof of Part (b) of Theorem \ref{thm:main}}

\vspace*{.1in}\begin{lemma}\label{lemma:Z} For any $i=1,\dots,r,$ the random variable $Z_{i,n}$
  converges a.e. to a $(0,1)$-valued random variable $Z_{i,\infty}$. Moreover,
$Z_{i,\infty}\ge Z_{i-1,\infty}$ a.e.
\end{lemma}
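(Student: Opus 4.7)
The plan is to build the limits $Z_{i,\infty}$ directly from the per-vector limits $Z_{v,\infty}$ supplied by Lemma \ref{lemma:Zv}, and then transfer the weight-ordering inside Lemma \ref{lemma:bhattaorder} into a pointwise ordering between $Z_{i-1,\infty}$ and $Z_{i,\infty}$. The convergence is a one-line consequence of Lemma \ref{lemma:Zv}; the monotone comparison is where the real content sits.

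For the convergence, I would expand $Z_{i,n} = 2^{-(i-1)}\sum_{v\in\cX_i} Z_{v,n}$. Since this is a finite sum and each $Z_{v,n}$ converges almost surely by Lemma \ref{lemma:Zv}, $Z_{i,n} \to Z_{i,\infty}:=2^{-(i-1)}\sum_{v\in\cX_i} Z_{v,\infty}$ almost surely. To argue that $Z_{i,\infty}$ is itself $\{0,1\}$-valued rather than taking intermediate averages, I would appeal to what is actually visible in the proof of Lemma \ref{lemma:Zv}: on every trajectory $\omega$ the values $Z_{v,\infty}(\omega)$, $v \in \cX_i$, all coincide with the Bernoulli variable $Z_{\max,\infty}^{(i)}(\omega)$. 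The average then collapses and $Z_{i,\infty} = Z_{\max,\infty}^{(i)} \in \{0,1\}$ a.s.

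For the ordering, I would pair a vector $v \in \cX_i$ with a vector $v' \in \cX_{i-1}$; since $\wt_r(v) = i \ge i-1 = \wt_r(v')$, Lemma \ref{lemma:bhattaorder} applies with the quantitative choice $\delta = \delta' q^{-3}$ and yields a trajectory-wise implication $Z_{v,n}(\omega) \to 1 \Rightarrow Z_{v',n}(\omega) \to 1$ on the same $\omega$. Combining this with the first step gives the inclusion of limiting events $\{Z_{i,\infty}=1\} \subseteq \{Z_{i-1,\infty}=1\}$ up to null sets, which is the monotone relation between $Z_{i-1,\infty}$ and $Z_{i,\infty}$ asserted in the lemma, in accord with the triangular configuration displayed in \eqref{eq:triangular}.

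The only delicate point, and where I expect the real work to go, is the passage from the pointwise weight-ordering of the $Z_{v,\infty}$ to a pointwise ordering of the averages $Z_{i,\infty}$: this requires the $Z_{v,\infty}$ for $v\in \cX_i$ to agree trajectory-by-trajectory, not merely in distribution. Without that strengthening of Lemma \ref{lemma:Zv}, one would only obtain an inequality in distribution, and an additional coupling argument would be needed to promote it to the almost-sure statement demanded by the lemma.
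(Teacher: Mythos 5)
Your proposal is correct and follows essentially the same route as the paper's proof: express $Z_{i,n}$ as the average over $v\in\cX_i$ of $Z_{v,n}$, use the trajectory-wise agreement of the $Z_{v,\infty}$ within each weight class to obtain the $\{0,1\}$-valued limit $Z_{i,\infty}$, and then transport the weight ordering of Lemma~\ref{lemma:bhattaorder} to the averages. The worry you raise at the end --- that Lemma~\ref{lemma:Zv} as stated only controls distributions --- is already resolved exactly the way you suspect: the paper's proof of that lemma works through the events $\Omega_j^{(i)}=\{\omega:Z_{\max,n}^{(i)}\to j\}$ and shows that for $\omega\in\Omega_j^{(i)}$ every $Z_{v,n}(\omega)$ with $v\in\cX_i$ converges to the same $j$, which is precisely the trajectory-wise coincidence you need, so no extra coupling argument is required. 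One small remark: the inequality as printed in the lemma, $Z_{i,\infty}\ge Z_{i-1,\infty}$, has its indices transposed relative to the triangular configuration \eqref{eq:triangular}; your inclusion $\{Z_{i,\infty}=1\}\subseteq\{Z_{i-1,\infty}=1\}$, equivalently $Z_{i-1,\infty}\ge Z_{i,\infty}$, is the relation the paper's argument actually establishes and the one consistent with \eqref{eq:triangular}.
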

\begin{proof}
The first part follows because all the $Z_v, v\in \cX_i$
converge to identical copies of the same random variable. Formally, Lemma \ref{lemma:Zv}
asserts that $Z_{v,n}\to j$ for every $v\in \cX_i$ and every $\omega\in \Omega_j^{(i)}, j=0,1.$
Hence taking the limit $n\to\infty$ in \eqref{eq:zi} we see that $Z_{i,n}\to j$ on $\Omega_j^{(i)}$
where $P(\Omega_0^{(i)}\cup\Omega_1^{(i)})=1.$

Let us prove the second part.
Suppose that $Z_{i,n}\ge 1-\epsilon',$ then using \eqref{eq:zi} we see that
$Z_{v',n}\ge 1-2^{i-1}\epsilon'$ for all $v'\in \cX_i.$ Lemma \ref{lemma:bhattaorder}
implies that $Z_{v,n}\ge 1- 2^{3r+i-1}\epsilon'$ for any $v\in\cX, \wt_r(v)=i,$
and therefore $Z_{i,n}\ge 1- 2^{3r+i-1}\epsilon'.$ Thus $Z_{i,n}(\omega)\to 1$ implies
$Z_{i-1}(\omega)\to 1$ for all $\omega\in \Omega_1(i)$ and all $i.$
The second claim of the lemma now follows because $Z_{i,\infty}$ are 0-1 valued
for all $i.$
 \end{proof}
\remove{
Suppose that $\bfv\in\cX, \wt_r(\bfv)=i,$ and that $Z_\bfv(V)\geq 1-\delta$ for
some $\delta>0.$ This implies that $Z(V_{\{\bfx,\bfx+\bfv\}})\ge 1 - \epsilon$ for
all $\bfx$ with $\epsilon = 2^r\delta.$

From Lemma \ref{lemma:bhattaorder}, if $Z(V_{\{\bfx,\bfx+\bfv\}})\ge 1 - \epsilon$,
$Z(V_{\{\bfx,\bfx+\bfv'\}})\ge 1 - 4q\epsilon$ for every $\bfv'$ with $\wt_r(\bfv')
= i$.
Then $Z_{i,n}$ is lower bounded by
\begin{align*}
Z_{i,n} = \frac1{2^{i-1}}
\sum_{\begin{substack}{\bfv\in\cX\\ \wt_r(\bfv)=i}\end{substack}} Z_{\bfv}(V)
\geq 1 - 4q\epsilon.
\end{align*}

{\em Completing the proof of Theorem \ref{thm:main}:}
Now suppose that $Z_{\bfv}(V) \leq \delta$ for some $\delta > 0$. Suppose
that there exists $\bfv'$ with $\wt_r(\bfv') =
\wt_r(\bfv)$
such that $Z_{\bfv'}(V) > \delta$. By convergence, this implies that there
exists a number $n_0$ such that for
Since $Z_{\bfv,n}$ converges to a $(0,1)$-valued random variable, for
a large number $n$, if $Z_{\bfv'}(V) \nleq \delta$ there exists some $\delta' > 0$
such that $Z_{\bfv'}(V) \geq 1- \delta'$. From the previous claim, $Z_{\bfv}(V) \geq
1 - 4q\epsilon$ where $\epsilon = 2^r\delta'$. This is a contradiction to our
assumption. Therefore, every $\bfv'$ with $\wt_r(\bfv')= \wt_r(\bfv)$, $Z_{\bfv'}(V)
\leq \gamma$ for some $\gamma > 0$ and this implies that $Z_{i,n}$ converges to $0$
whenever $Z_{\bfv,n}$ converges to $0$.}

\vspace*{.1in}
We obtain that $Z_{i,\infty}$ is a $(0,1)$ random variable a.e. and for all $i$, and
if $Z_{i,\infty}=1$ then $Z_{j,\infty}=1$ for all $1\le j< i.$
Consider the events $\Psi_i^{(j)}=\{\omega: Z_{j,\infty}=i\}, i=0,1; j=1,\dots,r.$
We have
  \begin{gather*}
   \Psi_1^{(1)}\supset\Psi_1^{(2)}\supset\dots\supset\Psi_1^{(r)}\\
   \Psi_0^{(1)}\subset\Psi_0^{(2)}\subset\dots\subset\Psi_0^{(r)}.
  \end{gather*}
We need to prove that with probability one, the vector $(Z_{i,\infty},i=1,\dots,r)$
takes one of the values \eqref{eq:triangular}. With probability one $Z_{r,\infty}=1$ or $0$.
If it is equal to $1$ then necessarily $Z_{r-1,\infty}=\dots=Z_{1,\infty}=1.$
Otherwise $Z_{r,\infty}=0.$  In this case it is possible that
$Z_{r-1,\infty}=1$ (in which case $Z_{r-2,\infty}=\dots=Z_{1,\infty}=1$)
or $Z_{r-1,\infty}=0.$ Of course $P(\Psi_0^{(r-1)}\cup\Psi_1^{(r-1)})=1,$ so
in particular
  $$
  P(\Psi_0^{(r)}\backslash(\Psi_0^{(r-1)}\cup(\Psi_1^{(r-1)}\backslash \Psi_1^{(r)})))
   =0.
  $$
If $Z_{r-1,\infty}=0$ then the possibilities are $Z_{r-2,\infty}=1$ or $0$, up to
another event of probability 0, and so on.
Thus, the union of the disjoint events given by \eqref{eq:triangular} holds with probability one.
Theorem \ref{thm:main} is proved.  \hfill \QED

\vspace*{.05in}\subsubsection{Proof of Prop.~\ref{prop:finite}}
The proof is analogous to the argument in the previous paragraph.
The random variable $Z_{r,n}\to Z_{r,\infty}$ a.e. .
By the Egorov theorem, for any $\gamma>0$ there are disjoint subsets
$\widetilde\Psi^{(r)}_0\subset \Psi^{(r)}_0,\widetilde\Psi_1^{(r)}\in \Psi^{(r)}_1$ with
$P(\widetilde\Psi^{(r)}_0\cup\widetilde\Psi^{(r)}_1)\ge 1-\gamma$ on which
this convergence is uniform. Take $n_1^{(r)}$ such that $Z_{r,n}>1-\epsilon/2^{4r-1}$
for every $\omega\in\widetilde\Psi_1^{(r)}$ and $n\ge n_1^{(r)}.$
By Lemma \ref{lemma:bhattaorder} and \eqref{eq:zi}
for every such $\omega$ we have $Z_{i,n}\ge 1-\epsilon$ for all $i=1,\dots,r-1;$
$n\ge n_1^{(r)}.$
This gives rise to the event $B_{r,n}.$ Otherwise let $n_0^{(r)}$ be such that
$\sup_{\omega}Z_{r,n}<\epsilon$ for $\omega\in \widetilde\Psi^{(r)}_0$ and $n\ge n_0^{(r)}.$
Consider the events $\widetilde\Psi^{(r-1)}_0\subset \Psi^{(r-1)}_0,\widetilde
\Psi_1^{(r-1)}\subset \Psi^{(r-1)}_1$ with
$P(\widetilde\Psi^{(r-1)}_0\cup\widetilde\Psi^{(r-1)}_1)\ge 1-\gamma$
on which $Z_{r-1,n}\to Z_{r-1,\infty}$ uniformly.
Choose $n_1^{(r-1)}$ such that $Z_{r-1,n}>1-\epsilon/2^{4r-2}$
for all $n\ge n_1^{(r-1)}$ and all $\omega \in \widetilde\Psi_1^{(r-1)}.$
For every such $\omega$ we have $Z_{i,n}\ge 1-\epsilon$ for all $i=1,\dots,r-2;$
$n\ge n_1^{(r-1)}.$ Next,
  $$
  P(\widetilde\Psi_0^{(r)}\backslash(\widetilde\Psi_0^{(r-1)}
\cup(\widetilde\Psi_1^{(r-1)}\backslash \widetilde\Psi_1^{(r)})))
   \le 2\gamma.
  $$
We continue in this manner until we construct all the $r+1$ events $B_{k,n}.$
For this, $n$ should be taken sufficiently large, $n\ge\max_{k}\max(n_0^{(k)},n_1^{(k)}).$
By taking $\gamma=\delta/r$ we can ensure that  $P(\cup_k B_{k,n}\ge1-\delta.$
This concludes the proof.

\vspace*{.05in} {\em Remark :} For binary-input channels, the transmitted bits in the limit are
transmitted either perfectly or carry no information about the message.
{\c S}a{\c s}o{\u g}lu et al. \cite{sas09} observed that $q$-ary
codes constructed using Ar\i kan's kernel $H_2$ share this property for transmitted
symbols only if $q$ is prime. Otherwise \cite{sas09} notes the symbols can polarize to states that
carry partial information about the transmission. In particular, they give an
example of a quaternary-input channel $W:\{0,1,2,3\}\to\{0,1\}$ with
$W(0|0)=W(0|2)=W(1|1)=W(1|3)=1.$ This channel has capacity 1 bit. Computing
the channels $W^+$ and $W^-$ we find that they are equivalent to the original
channel $W$. The conclusion reached in \cite{sas09} was that
there are nonbinary channels that do not polarize under the action of $H_2.$

We observe that the above channel corresponds to the extremal configuration $10$
in \eqref{eq:triangular} (the other two configurations arise with probability 0),
and therefore has to be, and is, a stable point of the channel combining operation.
It is possible to reach capacity by transmitting the least significant bit of every symbol.

Paper \cite{sas09} went on to show that for every $n\ge 1$ there exists a permutation $\pi_n:
\cX\to \cX$ such that the kernels $H_2(n):(u,v)\to (u+v, \pi_n(v))$ lead to channels
that polarize to perfect or fully noisy. While the result of \cite{sas09} holds for any
$q$, in the case of $q=2^r$
this means that configurations $00\dots 0$ and $11\dots 1$ arise
with probability $1-I(W)$ and $I(W)$ respectively, while all the other configurations
have probability zero.

\subsection{Rate of polarization and error probability of decoding}\label{sect:error}
The following theorem, due to Ar\i kan and Telatar \cite{ari09a}, is useful in
quantifying the rate of convergence of the channels $W_n$ to one of the extremal
configurations \eqref{eq:triangular}.
\begin{theorem} \label{thm:rate} \cite{ari09a} {\sl Suppose that a random process
$U_n,n\ge 0$ satisfies the
conditions
(i)-(iii) of Lemma \ref{lemma:conv} and that (iv), $U_n$ converges a.e. to a $\{0,1\}$-valued
random variable $U_\infty$ with $P(U_\infty=0)=p$. Then for any $\alpha\in(0,1/2)$
   \begin{equation}\label{eq:rate}
   \lim_{n\to\infty} P(U_n<2^{-2^{\alpha n}})=p.
   \end{equation}
If condition (iii) is replaced with (iii$'$) $U_n\le U_{n+1}$ and $U_0>0$, then
for any $\alpha>1/2,$
  $$
   \lim_{n\to\infty} P(U_n < 2^{-2^{\alpha n}})=0.
   $$}
\end{theorem}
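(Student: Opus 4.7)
The plan is to split the argument into two halves, with the first (the direct statement) being the substantial one.

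For the first statement, the upper bound $\limsup_n P(U_n<2^{-2^{\alpha n}})\le p$ is immediate, since $U_n\to 1$ on $\{U_\infty=1\}$ and so the event in question has vanishing probability there. For the matching lower bound I would work on $\{U_\infty=0\}$ and track $V_n:=-\log_2 U_n$ together with $K_n$, the number of `$+$' steps up to time $n$. The key trick is to consider the rescaled process $W_n:=V_n/2^{K_n}$: by (ii) a `$+$' step fixes $W_n$, while by (iii) a `$-$' step decreases it by at most $c\cdot 2^{-K_n}$, with $c=\log_2 q$. Telescoping yields
\[
V_n\ge 2^{K_n}\bigl(W_0 - c\,S_n\bigr),\qquad S_n:=\sum_{\substack{j\le n \\ B_j=-}}2^{-K_{j-1}}.
\]
Iterated expectation and the bound $P(B_j=-\mid\cF_{j-1})\le\tfrac12$ give $E(S_\infty)\le 2$, so by Markov's inequality $S_\infty$ stays below any threshold $A$ with probability $\ge 1-2/A$.

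I would then stitch these ingredients together. Fix $\alpha<1/2$ and $\eta>0$, pick $\epsilon>0$ with $(\tfrac12-\epsilon)(1-\epsilon)>\alpha$, and choose $M$ so large that $4c/M<\eta$. Let $T_M:=\inf\{n:V_n\ge M\}$; since $V_n\to\infty$ a.s.\ on $\{U_\infty=0\}$, Egorov's theorem yields $P(T_M\le\epsilon n)\ge p-\eta$ for all $n$ sufficiently large. Applying the $W$-estimate to the process restarted at $T_M$ shows that on a conditional event of probability $\ge 1-\eta$ the shifted cumulative loss stays below $M/(2c)$, giving $V_n\ge(M/2)\cdot 2^{K'_{n-T_M}}$ for all $n\ge T_M$, where $K'_m$ counts `$+$' steps among the $m$ steps after $T_M$. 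Since each $B_j$ equals `$+$' with conditional probability $\ge\tfrac12$, Azuma--Hoeffding applied to the submartingale $K'_m-m/2$ gives $K'_{n-T_M}\ge(\tfrac12-\epsilon)(n-T_M)\ge(\tfrac12-\epsilon)(1-\epsilon)n$ with probability $\ge 1-\eta$. Intersecting the three events yields $V_n\ge 2^{\alpha n}$ with total probability at least $p-3\eta$, and sending $\eta\to 0$ closes the argument.

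The second statement is short. Under (iii') a straightforward induction gives $U_n\ge U_0^{2^{K_n}}$, since `$+$' steps square the lower bound while `$-$' steps preserve it by monotonicity. Hence $U_n<2^{-2^{\alpha n}}$ forces $K_n>\alpha n-\log_2\log_2(1/U_0)$, a linear excess over the typical count $n/2$ when $\alpha>1/2$. In the polar-coding setting, where (ii) holds with equality, Hoeffding's inequality on $K_n$ yields $P(K_n>\alpha n-O(1))\le\exp(-2(\alpha-\tfrac12)^2 n+o(n))\to 0$. The main obstacle in the whole proof is making the rescaling/restart step for the lower bound go through on a single high-probability event: specifically, verifying that $E(S_\infty)\le 2$ survives under the restart at $T_M$ via the tower property, and orchestrating the three concentration estimates (small $T_M$, controlled cumulative loss, and typical count of `$+$' steps) cleanly. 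Everything else reduces to standard Egorov, Markov, and Azuma applications.
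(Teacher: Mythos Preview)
The paper does not prove this theorem; it is quoted from Ar{\i}kan and Telatar \cite{ari09a} and used as a black box, so there is no in-paper proof to compare against. Your argument is essentially the original Ar{\i}kan--Telatar bootstrapping proof: track $V_n=-\log_2 U_n$, rescale by $2^{K_n}$ so that squaring steps are neutral and non-squaring steps cost a geometrically summable amount, restart from a time when $V$ is large, and use a law-of-large-numbers/Azuma estimate on $K_n$. The calculation $E(S_\infty)\le 2$ is correct (tower property plus $E[2^{-K_j}]\le(3/4)^j$), and the restart at the stopping time $T_M$ preserves conditions (i)--(iii), so the conditional bound $E(S'_\infty\mid\cF_{T_M})\le 2$ goes through. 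Egorov is unnecessary for $P(T_M\le\epsilon n)\ge p-\eta$; this follows directly from $T_M<\infty$ a.s.\ on $\{U_\infty=0\}$.

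Your caveat on the second part is precisely right and worth emphasizing: as stated in the paper, condition (ii) is an \emph{inequality} $P(U_{n+1}=U_n^2\mid\cF_n)\ge\tfrac12$, under which the number $K_n$ of squaring steps can be as large as $n$, and then $P(K_n>\alpha n)$ need not vanish for $\alpha>\tfrac12$. The converse is only valid when (ii) holds with equality (equivalently, when the $B_i$ are i.i.d.\ fair), which is the case in the polar-code application that the paper actually uses. Also note that (iii$'$) as written, $U_n\le U_{n+1}$ for all $n$, is literally incompatible with (ii) on $\{U_n\in(0,1)\}$; the intended reading is that (iii$'$) applies on the non-squaring branch. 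Your induction $U_n\ge U_0^{2^{K_n}}$ uses exactly that reading and is correct.
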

Note that, as a consequence of Lemma \ref{lemma:conv}, assumption (iv) in this
theorem is superfluous in that it follows from (i)-(iii).

Processes $Z_{\max,n}^{(r)}$ and $Z_{\max,n}^{[r-j,r]}, j=0,\dots,r-1$ satisfy conditions
(i)-(iii) of Lemma \ref{lemma:conv}. Hence the above theorem gives the rate of convergence
of each of them to zero. We argue that the convergence rate of $Z_{\max,n}^{(r-j)},j\ge 1$
to zero is also governed by Theorem \ref{thm:rate}. Indeed, let
$\Omega_i^{[r-j,r]}=\{\omega:
Z_{\max,n}^{[r-j,r]}\to i\}, \Omega_i^{(r-j)}=\{\omega:Z_{\max,n}^{(r-j)}\to i\},i=0,1.$ Then
  \begin{equation}\label{eq:inc}
   \Omega_0^{(r-j)}\supseteq \Omega_0^{[r-j,r]} \text{ and }
    \Omega_1^{(r-j)}=\Omega_1^{[r-j,r]}
  \end{equation}
the last equality
because by Lemma \ref{lemma:bhattaorder}, $Z_{\max,n}^{[r-j,r]}\to 1$ implies $Z_{\max,n}^{(r-j)}\to 1$ on every trajectory. As a consequence of \eqref{eq:inc} we have that
  $
  P(\Omega_0^{(r-j)}\backslash \Omega_0^{[r-j,r]})=0.
  $
Hence $P(Z_{\max,\infty}^{(r-j)}=0)=P(Z_{\max,\infty}^{[r-j,r]}=0)$.
Denote this common value by $p_j.$
The random variable $Z_{\max,n}^{[r-j,r]}$ satisfies a condition of the form \eqref{eq:rate}
with $p=p_j.$
We obtain that for any $\alpha\in(0,1/2)$
  $$
  \lim_{n\to\infty}P(Z_{\max,n}^{(r-j)}<2^{-2^{\alpha n}})
=\lim_{n\to\infty}P(Z_{\max,n}^{(r-j)}<2^{-2^{\alpha n}})  =p_j.
  $$
Of course if $Z_{\max,n}^{(r-j)}$ is small then so is every $Z_{v,n}$ for $v\in \cX_{r-j}$.
We conclude as follows.
\begin{proposition} For any $\alpha\in(0,1/2)$ and any $v\in \cX_j,j=1,2,\dots,r$
  $$
   \lim_{n\to\infty}P(Z_{v,n}< 2^{-2^{\alpha n}})=p_j.
  $$
\end{proposition}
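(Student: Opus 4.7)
My plan is to sandwich $P(Z_{v,n} < 2^{-2^{\alpha n}})$ between $P(Z_{\max,n}^{(j)} < 2^{-2^{\alpha n}})$ from below and $p_j+o(1)$ from above, using only the pointwise inequality $Z_{v,n}\le Z_{\max,n}^{(j)}$ together with the $0/1$ dichotomy already established in Lemma~\ref{lemma:Zv}. The proposition is then an essentially immediate corollary of the display preceding it. No new recursive estimate on $Z_{v,n}$ alone is required, which is fortunate, since Eq.~\eqref{eq:le} does not give a self-contained squaring/growth relation for the single parameter $Z_{v,n}$ (that is precisely why the previous display was phrased in terms of $Z_{\max,n}^{(r-j)}$).

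For the lower bound, by the definition $Z_{\max}^{(j)}(W)=\max_{v'\in\cX_j}Z_{v'}(W)$ we have $Z_{v,n}(\omega)\le Z_{\max,n}^{(j)}(\omega)$ trajectory-wise for every $v\in\cX_j$. Hence $\{Z_{\max,n}^{(j)}<t\}\subseteq\{Z_{v,n}<t\}$ for any threshold $t>0$, and choosing $t=2^{-2^{\alpha n}}$ yields
$$P\bigl(Z_{v,n}<2^{-2^{\alpha n}}\bigr)\ge P\bigl(Z_{\max,n}^{(j)}<2^{-2^{\alpha n}}\bigr)\longrightarrow p_j$$
by the display preceding the proposition. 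Thus $\liminf_{n\to\infty} P(Z_{v,n}<2^{-2^{\alpha n}})\ge p_j$.

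For the upper bound I split the event according to the a.s.\ limit of $Z_{v,n}$:
$$P\bigl(Z_{v,n}<2^{-2^{\alpha n}}\bigr)\le P(Z_{v,\infty}=0)+P\bigl(Z_{v,n}<2^{-2^{\alpha n}},\,Z_{v,\infty}=1\bigr).$$
Lemma~\ref{lemma:Zv} gives $Z_{v,n}\to Z_{v,\infty}\in\{0,1\}$ almost surely and, tracing through its proof, identifies the convergence events $\{Z_{v,\infty}=i\}$ with $\Omega_i^{(j)}=\{Z_{\max,n}^{(j)}\to i\}$, whence $P(Z_{v,\infty}=0)=P(Z_{\max,\infty}^{(j)}=0)=p_j$. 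On the complementary event $\{Z_{v,\infty}=1\}$ the sequence $Z_{v,n}$ tends to $1$ a.s., so $Z_{v,n}>\tfrac12>2^{-2^{\alpha n}}$ for all sufficiently large $n$, and the second term vanishes as $n\to\infty$ by dominated convergence. Combining, $\limsup_{n\to\infty} P(Z_{v,n}<2^{-2^{\alpha n}})\le p_j$, which together with the lower bound proves the claim. I do not anticipate any serious obstacle; the only conceptual point is that the super-exponential decay rate for the single-component $Z_{v,n}$ is inherited in both directions from the dominating quantity $Z_{\max,n}^{(j)}$, whose rate was already derived via Theorem~\ref{thm:rate} applied to a suitable $U_n$ satisfying the hypotheses of Lemma~\ref{lemma:conv}.
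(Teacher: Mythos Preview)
Your proof is correct and follows essentially the same route as the paper: the paper's entire argument for this proposition is the one sentence ``Of course if $Z_{\max,n}^{(r-j)}$ is small then so is every $Z_{v,n}$ for $v\in\cX_{r-j}$,'' i.e., the pointwise domination $Z_{v,n}\le Z_{\max,n}^{(j)}$, combined with the display just before it. You have simply made explicit the upper bound (via the $0/1$ a.s.\ limit from Lemma~\ref{lemma:Zv} and dominated convergence) that the paper leaves implicit.
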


This result enables us to estimate the probability of decoding error under successive
cancellation decoding.
To do this, we extend the argument of \cite{ari09} to nonbinary alphabets.

The following statement follows directly from the previously established results, notably
Proposition \ref{prop:Ik}.
\vspace*{.05in}
\begin{theorem}\label{thm:coderate} Let $0<\alpha< 1/2.$ For any DMC $W:\cX\to\cY$ with
$I(W)>0$ and any $R<I(W)$ there exists a sequence of $r$-tuples of disjoint subsets
$\cA_{0,N},\dots,\cA_{r-1,N}$ of $[N]$ such that $\sum_k|\cA_{k,N}|(r-k)\ge NR$ and
$Z_{v}(W_N^{(i)})< 2^{-N^\alpha}$ for all $i \in \cA_{k,N},$ all
$v \in \bigcup_{l=k+1}^r\cX_{l},$ and all $k=0,1,\dots,r-1.$
\end{theorem}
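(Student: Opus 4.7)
The plan is to assign each coordinate $i \in [N]$ to the unique polarization level $k$ at which all Bhattacharyya parameters of ordered weight exceeding $k$ have fallen below $2^{-N^\alpha}$. Concretely, for $k = 0, 1, \ldots, r-1$ define
\[
\cA_{k,N} = \Bigl\{i \in [N] : Z_v(W_N^{(i)}) < 2^{-N^\alpha} \ \text{for all } v \in \bigcup_{l = k+1}^r \cX_l, \text{ and } \max_{v \in \cX_k} Z_v(W_N^{(i)}) \geq 2^{-N^\alpha}\Bigr\},
\]
the second clause being vacuous at $k = 0$. Disjointness of the $\cA_{k,N}$ and the Bhattacharyya bound required by the theorem both follow immediately from this definition; the real content is the rate estimate.

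First I would note that $\bigcup_{j = 0}^k \cA_{j,N}$ equals the set of $i \in [N]$ with $Z_v(W_N^{(i)}) < 2^{-N^\alpha}$ for every $v \in \bigcup_{l > k} \cX_l$, because membership in $\cA_{j,N}$ picks out the smallest admissible $j$. Writing $E_{v,n} := \{Z_{v,n} < 2^{-2^{\alpha n}}\}$ and $F_v := \{Z_{v,\infty} = 0\}$, the proposition immediately preceding this theorem yields $P(E_{v,n}) \to P(F_v)$, while on $F_v^c$ the limit variable $Z_{v,\infty}$ equals $1$, so $Z_{v,n} \to 1$ a.e.\ and $P(E_{v,n} \cap F_v^c) \to 0$. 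Combining, $P(E_{v,n} \bigtriangleup F_v) \to 0$. A finite union bound over $v \in \bigcup_{l > k} \cX_l$, together with the monotonicity $Z_{l,\infty} \leq Z_{l-1,\infty}$ a.e.\ from Lemma \ref{lemma:Z} (which collapses $\{Z_{l,\infty} = 0 \text{ for all } l > k\}$ to the single event $\{Z_{k+1,\infty} = 0\}$), then gives
\[
\frac{|\bigcup_{j \leq k} \cA_{j,N}|}{N} \longrightarrow P(Z_{k+1,\infty} = 0) = \sum_{j=0}^{k} q_j,
\]
where $q_j := P((Z_{1,\infty},\ldots,Z_{r,\infty}) = 1^j 0^{r-j})$ as in Theorem \ref{thm:main}(b). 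Differencing consecutive cumulatives yields $|\cA_{k,N}|/N \to q_k$.

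Summing then produces
\[
\frac{1}{N} \sum_{k=0}^{r-1} |\cA_{k,N}|(r-k) \longrightarrow \sum_{k=0}^{r} q_k(r-k) = E(I_\infty) = I(W)
\]
by Theorem \ref{thm:main}(a), so for any $R < I(W)$ we can take $N$ large enough that $\sum_k |\cA_{k,N}|(r-k) \geq NR$. The main obstacle is the passage from the asymptotic exponential decay of each individual $Z_{v,n}$ (a single-coordinate statement) to the joint event defining $\cA_{k,N}$; this step relies on both the Bernoulli nature of the limits $Z_{v,\infty}$ and the Ar\i kan--Telatar rate theorem, and is what allows the discrete picture of the $r+1$ extremal configurations to govern the finite-$N$ behavior at the sharp exponential scale $2^{-N^\alpha}$.
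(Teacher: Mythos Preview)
Your argument is correct and is precisely the fleshing-out that the paper omits: the paper states only that the theorem ``follows directly from the previously established results, notably Proposition~\ref{prop:Ik},'' without writing out the construction of the sets $\cA_{k,N}$ or the passage from the single-$v$ rate proposition to the joint event. Your definition of $\cA_{k,N}$ via the smallest admissible level, the symmetric-difference argument $P(E_{v,n}\bigtriangleup F_v)\to 0$, and the telescoping to recover $|\cA_{k,N}|/N\to q_k$ are exactly what is needed to make that sentence rigorous, and they match the intended route.
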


\vspace*{.05in} Let
\begin{align*}
    \cE &\triangleq \{(u_1^N,y_1^N)\in\cX^N\times\cY^N : \hat u_1^N \neq u_1^N\} \\
    \cB_i &\triangleq \{(u_1^N,y_1^N)\in\cX^N\times\cY^N : \hat u_1^{i-1}= u_1^{i-1},
\hat u_i \neq u_i\}.
\end{align*}
Then the block error probability of decoding is defined as
  $$
P_e = P(\cE) = P\big(\bigcup_{i\in\cA_{0,N}\cup\dots\cup\cA_{r-1,N}}\cB_i\big).
  $$

\vspace*{.05in} The next theorem is the main result of this section.
\begin{theorem}\label{thm:decodeerror} Let $0<\alpha< 1/2$ and let $0< R < I(W),$
where $W:\cX\to\cY$ is a DMC. The best achievable error probability of block
error under successive cancellation decoding at block length $N=2^n$ and rate $R$ satisfies
  $$
P_e = O(2^{-N^\alpha}).
  $$
\end{theorem}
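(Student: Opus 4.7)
The plan is to adapt Ar\i kan's error analysis \cite{ari09} from $q=2$ to general $q=2^r$: decompose the block error by a genie-aided union bound over symbol positions, handle each per-position error by a pairwise-comparison union bound against Bhattacharyya distances, and invoke Theorem~\ref{thm:coderate} to turn each such distance into a doubly-exponentially small factor. The combinatorial point specific to $q$-ary codes is that for a position $i \in \cA_{k,N}$ whose first $k$ bits are frozen, only the differences $v$ lying in the subgroup $\{0\}^k \times \{0,1\}^{r-k}$ can cause a confusion, and these are precisely the $v$ for which Theorem~\ref{thm:coderate} controls $Z_v(W_N^{(i)})$.

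First I would write
\[
 P_e \;\le\; \sum_{i \in \cA_{0,N}\cup\dots\cup\cA_{r-1,N}} P(\cB_i)
\]
by the standard genie-aided decomposition. For $i \in \cA_{k,N}$ with $k \in \{0,\dots,r-1\}$, the event $\cB_i$ already assumes $\hat u_1^{i-1} = u_1^{i-1}$, so the decoder is effectively performing MAP estimation of $u_i$ with uniform prior over the coset $\{x \in \cX : x_1^k = a\}$ determined by the frozen prefix $a$. A pairwise union bound combined with the classical Bhattacharyya inequality $\Pr\{W(Y|x')\ge W(Y|x)\mid X=x\}\le Z(W_{\{x,x'\}})$ then yields
\[
P(\cB_i) \;\le\; \frac{1}{2^{r-k}}\sum_{x:\,x_1^k=a}\;\sum_{v\ne 0,\,v_1^k=0} Z\bigl(W^{(i)}_{N,\{x,x+v\}}\bigr)
\;\le\; q\sum_{v\ne 0,\,v_1^k=0} Z_v(W_N^{(i)}),
\]
where the last step enlarges the inner sum over $x$ to all of $\cX$ and invokes the definition of $Z_v(W)$.

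Every non-zero $v$ with $v_1^k=0$ has $\wt_r(v)\in\{k+1,\dots,r\}$, so Theorem~\ref{thm:coderate}, applied with some $\alpha' \in (\alpha,1/2)$, bounds each such $Z_v(W_N^{(i)})$ by $2^{-N^{\alpha'}}$. Summing over at most $q$ relevant differences per position and at most $N$ data-carrying positions gives $P_e \le q^2 N\,2^{-N^{\alpha'}}$, and the polynomial prefactor is absorbed into the doubly-exponential decay (for $N$ large one has $N^{\alpha'}-\log_2(q^2 N)\ge N^\alpha$), delivering $P_e = O(2^{-N^\alpha})$. The one delicate step is matching the set of ``dangerous'' differences at position $i$ exactly with the family of $v$ covered by Theorem~\ref{thm:coderate}; the coset structure makes this transparent, and everything else is routine.
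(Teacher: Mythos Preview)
Your approach is exactly the paper's: genie-aided union bound over positions, pairwise Bhattacharyya bound per position, then Theorem~\ref{thm:coderate}. Your use of an intermediate $\alpha'\in(\alpha,1/2)$ to absorb the polynomial $q^2N$ into the $O(2^{-N^\alpha})$ is a clean refinement; the paper applies Theorem~\ref{thm:coderate} directly with $\alpha$ and writes $P_e\le qN\,2^{-N^\alpha}=O(2^{-N^\alpha})$, tacitly relying on the same absorption.

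One slip to fix. You describe the dangerous differences as ``the subgroup $\{0\}^k\times\{0,1\}^{r-k}$'', i.e.\ $v_1^k=0$. That would be correct for bitwise XOR, but here the channel combining uses addition modulo $q$. If $u_i,\hat u_i\in\cX(a)$ then $v=\hat u_i-u_i\pmod q$ need not have $v_1^k=0$ (e.g.\ $r=3$, $k=1$, $a=0$, $u_i=3$, $\hat u_i=1$ gives $v=6=110$). So as written your inner union does not cover all competitors. The repair is immediate: for any $u_i,\hat u_i\in\cX(a)$ with $u_i\ne\hat u_i$, writing both as $a\cdot 2^{r-k}+b$ with $b\in\{0,\dots,2^{r-k}-1\}$ shows that $v\bmod q$ always has some bit among positions $k+1,\dots,r$ nonzero, hence $\wt_r(v)\ge k+1$, i.e.\ $v\in\bigcup_{l=k+1}^r\cX_l$. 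Summing over this (slightly larger) set of $v$'s in the union bound and invoking Theorem~\ref{thm:coderate} gives the same conclusion with the same constants.
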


\begin{proof}
\remove{Let us consider
\begin{align*}
    \cE_i \triangleq \{(u_1^N,y_1^N) \in \cX^N \times \cY^N: \hat u_i \neq u_i\}
\end{align*}
for $i \in \cA_{k,N}$, $k=1,\dots,r$. Then $\cB_i \subset \cE_i$. Therefore, we have
\begin{align*}
    P(\cE) \leq \sum_{i\in\cA_N} P(\cE_i).
\end{align*}
To get an upper bound on $P(\cE_i)$, consider}
Let
\begin{align*}
    \cE_{i,v} \triangleq \{&(u_1^N,y_1^N) \in \cX^N \times \cY^N:
    \\ &W_N^{(i)}(y_1^N,u_1^{i-1}|u_i) \leq
    W_N^{(i)}(y_1^N,u_1^{i-1}|u_i+v)\}.
\end{align*}
For a fixed value of $a_1^k=(a_1,a_2,\dots,a_{k})\in\{0,1\}^k$ let us define
$
    \cX(a_1^k) = \{x\in\cX:x_1^k=a_1^k\}.
$
Notice that the decoder finds $\hat u_i$, $i \in \cA_{k,N}$ by taking
the maximum over the symbols $x\in \cX(a_1^k)$. Then we obtain
\begin{align*}
    \cB_i \subseteq \bigcup_{v\in\cX(a_1^k)} \cE_{i,v}.
\end{align*}
Using \eqref{eq:WNi}, we obtain
\begin{align*}
    &P(\cB_i)\leq \sum_{v\in\cX(a_1^k)}P(\cE_{i,v})\\
    &=\sum_{v\in\cX(a_1^k)} \sum_{u_1^N,y_1^N}
    \frac{1}{q^N}W_N(y_1^N|u_1^N)1_{\cE_{i,v}}(u_1^N,y_1^N)\\
    &\leq\sum_{v\in\cX(a_1^k)} \sum_{u_1^N,y_1^N}
    \frac{1}{q^N} W_N(y_1^N|u_1^N)
    \sqrt{\frac{W_N^{(i)}(y_1^N,u_1^{i-1}|u_i+v)}{W_N^{(i)}(y_1^N,u_1^{i-1}|u_i)}}\\
    &=\sum_{v\in\cX(a_1^k)}
    \sum_{u_i}\frac{1}{q}Z(W_{N,\{u_i,u_i+v\}}^{(i)})\\
    &=\sum_{v\in\cX(a_1^k)} Z_v(W_N^{(i)}).
\end{align*}
Thus the decoding error is bounded by
\begin{align*}
    P(\cE) \leq \sum_{i\in \cA_{0,N}\cup\dots\cup\cA_{r-1,N}}\sum_{v\in\cX(a_1^k)} Z_v(W_N^{(i)}).
\end{align*}
By Theorem \ref{thm:coderate}, for any $R < I(W)$ there exists a sequence of $r$-tuples of
disjoint subsets $\cA_{0,N},\dots,\cA_{r-1,N}$ with $\sum_{k}|\cA_{k,N}|(r-k)\geq NR$ such that
\begin{align*}
    \sum_{i\in \cA_{0,N}\cup\dots\cup\cA_{r-1,N}}\sum_{v\in\cX(a_1^k)} Z_v(W_N^{(i)})
    \leq qN2^{-N^\alpha}
\end{align*}
and thus we obtain that $P(\cE) = O(2^{-N^\alpha})$.
\end{proof}

\section{Ordered Channels}\label{sect:ordered}

To compute a few examples, consider ``ordered symmetric channels,'' called so because
they provide a natural counterpart to the combinatorial definition of the
ordered distance \cite{par11a}. A simple example is given by the ordered erasure channel,
defined as
$W_r:\ff_q^r \rightarrow
(\ff_q\cup\{?\})^r,$ where
   $$
W_r(y|x)=\begin{cases}
        \epsilon_0, & y=x, \\
      \epsilon_i, & y=(??\dots?x_{i+1}\dots x_r), 1\le i\le r
                     \end{cases}
  $$
and $W_r(y|x)=0$ if $y$ does not contain any erased coordinates and $y \neq x$.
Its capacity equals $r-\sum_{i=1}^r i\epsilon_i$ and is attained by sending $r$
independent streams of data encoded for binary erasure channels with erasure
probabilities $\sum_{j=i}^r \epsilon_{j}, i=1,\dots,r.$ Therefore, sending $r$
independent polar codewords over the $r$ bit channels, one can approach the capacity
of the channel.

Despite the fact that this example is trivial, it already shows the domination
pattern observed in Theorem \ref{thm:main}. Namely, it is easy to prove directly that
$Z_{j,\infty}\ge Z_{i,\infty}$ a.s. for all $i>j,$ thereby establishing the result of
Lemma \ref{lemma:Z}. For that it suffices to observe that the erasure in
higher-numbered bits implies that all the lower-numbered bits are erased with
probability 1.  We include two examples. In Fig.~\ref{fig1}, $r=2,$
and $\epsilon_0=0.5, \epsilon_1=0.4, \epsilon_2=0.1$. In Fig.~\ref{fig2}, $r=9$ and
$\epsilon_i=0.1, i=0,1,\dots,9.$ Note that the proportion of the channels with
capacity $i=0,1,\dots,r$ bits converges to $\epsilon_i.$

Another example is given by the {\em ordered symmetric channel}  \cite{par11a} which is a DMC $W:\{0,1\}^r\to\{0,1\}^r$
defined by the matrix $W(\bfy|\bfx)$ where
   \begin{equation}\label{eq:OSC}
   W(y|x)=2^{-(j-1)} \epsilon_j
   \end{equation}
for all pairs $y,x$ such that $d_r(x,y)=j , \quad j=1,\dots, r,$
and where $W(x|x)=\epsilon_0$ for all $x\in \cX$.
The ordered symmetric channel models transmission over $r$ parallel links such that,
if in a given time slot a bit is received incorrectly, the bits with indices lower than
it are equiprobable. This system was proposed in \cite{tav06} as an abstraction of transmission
in wireless fading environment.
The capacity of the channel equals
  $$
  I(W)=r+\epsilon_0\log_q\epsilon_0+\sum_{i=1}^r{\epsilon_i}\log_q\Big(\frac{\epsilon_i}{q^{i-1}(q-1)}
  \Big).
  $$
By Theorem \ref{thm:main}
$q$-ary polar codes, $q=2^r$ can be used to transmit at rates close to capacity on this channel;
moreover, the domination pattern that emerges, exactly matches the fading nature of the bundle of
$r$ parallel channels, achieving the capacity of the system discussed above. \remove{We note
an earlier use of polar codes to parallel channels \cite{hof10}; however the
transmission model and the results in that work are very different from our approach.
In particular, \cite{hof10} makes no mention of multilevel polarization.}

\begin{figure}
\centering
\includegraphics[width=3.2in]{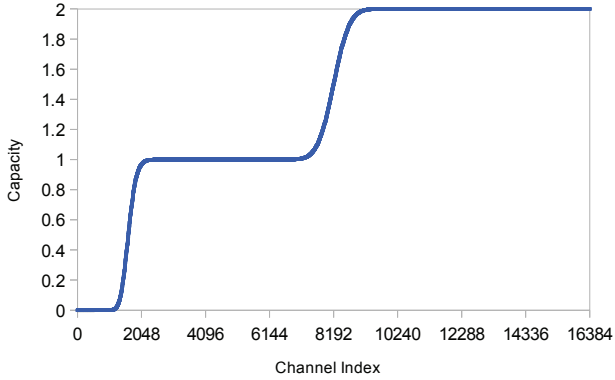}
\caption{ 3-level polarization on the ordered erasure
  channel $W:\cX\to\cY, \cX=\{00,01,10,11\}$ with
  transition probabilities $\epsilon_0:=W(00|00)=0.5, \epsilon_1:=W(?x_2|x_1x_2)=0.4,
  \epsilon_2:=W(??|x_1,x_2)=0.1$, for all $x_1,x_2\in\{0,1\}$. In this example it is easy to see
that $P(I_\infty=i)=\epsilon_i,i=0,1,2.$}\label{fig1}
\end{figure}

\begin{figure}
\centering
\includegraphics[width=3.2in]{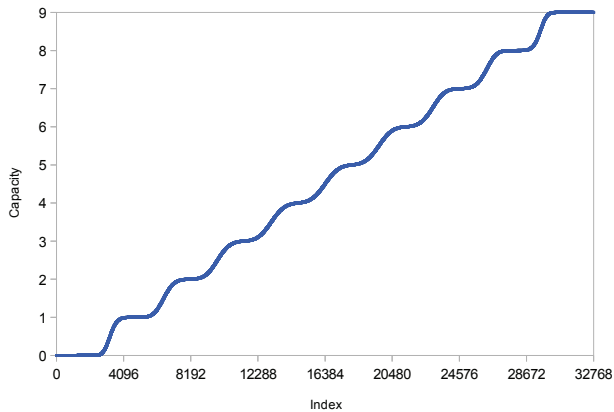}
\caption{10-level polarization on the ordered erasure
  channel $W:\{0,1\}^9\to \cY$ with transition probabilities $\epsilon_i=0.1,i=0,1,\dots,9.$}\label{fig2}
\end{figure}

\section{Conclusion}
The result of this paper offers more detailed information about polarization
on $q$-ary channels, $q=2^r.$ The multilevel polarization adds flexibility
to the design of the transmission scheme in that we can adjust the number of
symbols that carry a given number of bits to a specified proportion of the
overall transmission as long as the total number of bits is fixed.
This could be useful in the design of signal constellations for coded modulation,
including BICM \cite{Alvarado2010,cal93} as well as in other communication problems
that can benefit from nonuniform symbol sets.

The authors are grateful to Emmanuel Abbe, Eren {\c S}a{\c s}o{\u g}lu,
and Emre Telatar (EPFL), and Leonid Koralov, Armand Makowski,
and Himanshu Tyagi (UMD) for useful discussions of this work.
This research was partially supported by NSF grants CCF0916919, CCF0830699,
and DMS1117852.

\section*{Appendix}

\emph{The proof of \eqref{eq:upper}} : We shall
break the expression for $I(W)$ into a sum of symmetric capacities of B-DMCs.

Let $z=(z_1,\dots,z_k)$ be an $k$-tuple of symbols from
$\cX$. Define the probability distribution $P(y|z)=\frac
1k\sum_{i=1}^k W(y|z_i).$ Define a B-DMC
$W^{(k)}_{\{z^{(1)},z^{(2)}\}}:\cX^k\to \cY$ with inputs
$z^{(i)}\in \cX^k$, where the transition $z^{(i)}\to y$ is
given by $P(y|z^{(i)}),$ $i=1,2$.

\begin{lemma}\label{lemma:mixedbhatta}
The Bhattacharyya parameter of the channel $W^{(k)}_{\{z^{(1)},z^{(2)}\}},$
where $z^{(1)}=(x_1,\dots,x_k),z^{(2)}=(x_{k+1},\dots,x_{2k}),$
 can be
lower bounded by
\begin{align}
Z(W^{(k)}_{\{z^{(1)},z^{(2)}\}}) \geq \frac{1}{k} \sum_{j=1}^{k}
Z(W_{\{x_j,x_{f(j)}\}})\label{eq:mixedbhattalower}
\end{align}
for any $f$ which is a one-to-one mapping from the set
$\{1,2,\dots,k\}$ to $\{k+1,\dots,2k\}$.
\end{lemma}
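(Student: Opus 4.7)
My plan is to reduce the desired inequality to a pointwise application of the Cauchy--Schwarz inequality in the output variable $y\in\cY$. First, I would write out both sides of \eqref{eq:mixedbhattalower} using the definitions. Expanding $Z(W^{(k)}_{\{z^{(1)},z^{(2)}\}})=\sum_y\sqrt{P(y|z^{(1)})P(y|z^{(2)})}$ and substituting $P(y|z^{(i)})$, the left-hand side becomes $\frac{1}{k}\sum_y\sqrt{A(y)B(y)}$, where $A(y)=\sum_{i=1}^{k}W(y|x_i)$ and $B(y)=\sum_{i=k+1}^{2k}W(y|x_i)$. The right-hand side is $\frac{1}{k}\sum_{y}\sum_{j=1}^{k}\sqrt{W(y|x_j)W(y|x_{f(j)})}$.

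Next, for each fixed $y$, I would apply Cauchy--Schwarz to the two $k$-vectors with components $a_j=\sqrt{W(y|x_j)}$ and $b_j=\sqrt{W(y|x_{f(j)})}$ for $j=1,\ldots,k$. This gives $\sum_{j=1}^k a_j b_j\le\sqrt{\sum_j a_j^2}\sqrt{\sum_j b_j^2}$, which rewrites as
\[
\sum_{j=1}^{k}\sqrt{W(y|x_j)W(y|x_{f(j)})}\;\le\;\sqrt{A(y)\,B(y)}.
\]
At this step I would use the hypothesis that $f$ is a one-to-one map from $\{1,\ldots,k\}$ onto $\{k+1,\ldots,2k\}$: the reindexing $j\mapsto f(j)$ on the multiset $\{b_j^2\}$ simply permutes the terms $W(y|x_{k+1}),\ldots,W(y|x_{2k})$, so $\sum_j b_j^2$ equals $B(y)$ exactly rather than a sum over a proper subset.

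Finally, summing this pointwise bound over $y\in\cY$ and dividing by $k$ produces \eqref{eq:mixedbhattalower}. There is no substantive obstacle: the argument is a single invocation of Cauchy--Schwarz, and the only item requiring care is the bookkeeping that the bijectivity of $f$ yields the correct normalizing sum $B(y)$ under the square root, which is what allows the resulting bound to coincide with $\sqrt{P(y|z^{(1)})P(y|z^{(2)})}$ after the $1/k$ factor is distributed inside.
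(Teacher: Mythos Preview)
Your proof is correct and follows essentially the same approach as the paper: writing out both sides and applying Cauchy--Schwarz pointwise in $y$. In fact your treatment is slightly cleaner, since you explicitly use the bijectivity of $f$ to identify $\sum_j b_j^2$ with $B(y)$ for an arbitrary $f$, whereas the paper fixes $f(i)=k+i$ and appeals (somewhat loosely) to the claim that it suffices to check a single mapping.
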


\begin{proof} It suffices to prove the
above inequality for some one-to-one mapping.
Let $f(i) = k + i.$ For brevity denote $w_{i,y}=W(y|x_i). $
We have
\begin{align*}
Z(W^{(k)}_{\{z^{(1)},z^{(2)}\}}) = \frac{1}{k}\sum_{y}\sqrt{\bigg(\sum_{i=1}^k
w_{i,y}\bigg) \bigg(\sum_{i=k+1}^{2k} w_{i,y}\bigg)},
\end{align*}
while the right hand side of (\ref{eq:mixedbhattalower}) is
\begin{align*}
\frac{1}{k} \sum_{j=1}^{k} Z(W_{\{x_j, x_{f(j)}\}}) =
\frac{1}{k}\sum_{y}\sum_{i=1}^{k}\sqrt{w_{i,y}w_{k+i,y}}.
\end{align*}
The Cauchy-Schwartz inequality gives us
  $$
\Big(\sum_{i=1}^k
w_{i,y}\Big) \Big(\sum_{i=k+1}^{2k} w_{i,y}\Big) \ge
\Big(\sum_{i=1}^{k}\sqrt{w_{i,y}w_{k+i,y}}\Big)^2
  $$
hence the lemma.
\remove{Having fixed $y,$ we will omit it from the notation and write $w_i=w_{i,y}.$
The induction base is obvious. Let us assume that this is true for some $k$.
We have
\begin{align*}
&\bigg(\sum_{i=1}^{k+1} w_i \bigg)
\bigg(\sum_{i=k+2}^{2k+2} w_i \bigg)=\bigg( w_{k+1}+ \sum_{i=1}^{k} w_i\bigg)
\bigg(w_{2k+2}+ \sum_{i=k+2}^{2k+1} w_i  \bigg)\\
&=\bigg(\sum_{i=1}^{k} w_i\bigg)  \bigg(\sum_{i=k+2}^{2k+1}
w_i\bigg) + w_{k+1}\sum_{i=k+2}^{2k+1} w_i\\
&\qquad + w_{2k+2} \sum_{i=1}^{k} w_i + w_{k+1} w_{2k+2}
\end{align*}
and
\begin{align*}
&\bigg(\sum_{i=1}^{k+1}\sqrt{w_{i}w_{k+1+i}}\bigg)^2\\
&=\bigg(\sum_{i=1}^{k}\sqrt{w_{i}w_{k+1+i}} +
\sqrt{w_{k+1}w_{2k+2}}\bigg)^2\\
&=\bigg(\sum_{i=1}^{k}\sqrt{w_{i}w_{k+1+i}}\bigg)^2 + w_{k+1}
w_{2k+2}\\
&\qquad + 2\sqrt{w_{k+1}w_{2k+2}}
\sum_{i=1}^{k}\sqrt{w_{i}w_{k+1+i}}.
\end{align*}

The difference between the last two expressions is
\begin{align*}
&\bigg(\sum_{i=1}^{k+1} w_i\bigg)  \bigg(\sum_{i=k+2}^{2k+2}
w_i\bigg) - \bigg(\sum_{i=1}^{k+1}\sqrt{w_{i}w_{k+1+i}}\bigg)^2\\
&=\bigg(\sum_{i=1}^{k} w_i\bigg) \bigg(\sum_{i=k+2}^{2k+1}
w_i\bigg) - \bigg(\sum_{i=1}^{k}\sqrt{w_{i}w_{k+1+i}}\bigg)^2\\
&\quad + w_{k+1}\sum_{i=k+2}^{2k+1} w_i + w_{2k+2}
\sum_{i=1}^{k} w_i\\
&\quad - 2\sqrt{w_{k+1}w_{2k+2}} \sum_{i=1}^{k}\sqrt{w_{i}w_{k+1+i}}
\\
&\geq w_{k+1}\sum_{i=k+2}^{2k+1} w_i + w_{2k+2}
\sum_{i=1}^{k} w_i\\
&\quad - 2\sqrt{w_{k+1}w_{2k+2}}
\sum_{i=1}^{k}\sqrt{w_{i}w_{k+1+i}} \\
&\geq 0
\end{align*}
where the first inequality follows from the induction hypothesis,
and the second one follows by term-wise application of the
arithmetic-geometric mean inequality.}
\end{proof}

Let us introduce some notation.
Given $z=(z_1,\dots,z_k)\in\cX^k,$ let $z \oplus x =
(z_1\oplus x, \dots, z_k\oplus x)$ where $\oplus$ is a
bit-wise modulo-$2$ summation. In the next lemma we consider B-DMCs
$W^{(k)}_{\{z_m^{(1)},z_m^{(2)}\}}:\cX^{k}\to\cY,k=2^{m-1}, m=1,\dots,r$
with inputs of special form. Namely,
\remove{\begin{gather*}
z_1^{(1)}=x_1\\
z_2^{(1)}=(x_1,x_1\oplus x_2)\\
z_3^{(1)}=(x_1,x_1\oplus x_2,x_1\oplus x_3,
x_1\oplus x_2\oplus x_3)\\
\dots
\end{gather*}}
$z_1^{(1)}=x_1;\,
z_2^{(1)}=(x_1,x_1\oplus x_2);\,
z_3^{(1)}=(x_1,x_1\oplus x_2,x_1\oplus x_3,
x_1\oplus x_2\oplus x_3),$ and
generally, $z_m^{(1)}$ is formed of $x_1$ plus all the possible
sums of the vectors $x_2,\dots,x_m$ with $0-1$ coefficients, including the empty one.
Finally, $z_m^{(2)}=z_m^{(1)}\oplus x_{m+1}.$

For $m=0,1,\dots,r-1$ introduce the set
$\cA=\cA(x_1,\dots,x_{m+1})\subset\cX^{m+1}$ as follows:
  \begin{align*}
&\cA=\Big\{(x_1,\dots,x_{m+1})\in \cX^{m+1}\big|
x_1\in\cX;
x_2\in\cX\backslash\{0\};\\
&x_j\ne \sum_{i=2}^{j-1} a_i x_i, \text{ for all choices of } a_i\in\{0,1\}
,
j=3,\dots,m+1\Big\}
\end{align*}

We need the following technical lemma.
\begin{lemma}\label{lemma:sum}
  \begin{equation}\label{eq:sum}
   I(W)=\sum_{m=1}^{r} \bigg(\frac{1}{2^r} \prod_{j=1}^{m} \frac{1}{2^r -
2^{j-1}}\bigg)
\sum_{\cA(x_1,\dots,x_{m+1})}I(W_{\{z_m^{(1)},z_m^{(2)}\}}^{(k)})
 \end{equation}
where the number $k$, the vectors $z_m^{(1)},z_m^{(2)}$, and the set
$\cA(x_1,\dots,x_{m+1})$ are defined before the lemma.
\remove{\begin{align}
z_m^{(1)}&=(x_1,x_1\oplus x_2,x_1\oplus x_3,x_1\oplus x_2\oplus x_3,\dots,\nonumber\\
&\qquadx_1\oplus x_m,\dots,x_1\oplus\dots\oplus x_m),\label{eq:zm1}\\
z_m^{(2)}&=z_m^{(1)}\oplus x_{m+1}.\label{eq:zm2}
\end{align}
while the condition $c(x_2,\dots,x_{m+1})$ of the second
summation means that $x_j \neq 0$ for $j=2,\dots,m+1$ and $x_j$
is not equal to any summation of $x_2,\dots,x_{j-1}$ for $j=3,\dots,m+1.$}
\end{lemma}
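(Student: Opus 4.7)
The right-hand side of \eqref{eq:sum} can be read as an expectation: the $m$-th summand is the average of $I(W^{(k)}_{\{z_m^{(1)}, z_m^{(2)}\}})$ over tuples $(x_1, \ldots, x_{m+1})$ drawn uniformly from $\cA$. My plan is to derive the identity by expanding $I(W) = I(X;Y)$ via a chain rule associated with an $\ff_2$-basis of $\cX$, and then averaging over random choices of that basis.

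First I would fix any $x_1 \in \cX$ and any ordered basis $(x_2, \ldots, x_{r+1})$ of $\ff_2^r$, let $a_2, \ldots, a_{r+1}$ be i.i.d.\ uniform bits, and set $X = x_1 \oplus \sum_{j=2}^{r+1} a_j x_j$. Since $X$ is uniform on $\cX$ we have $I(X;Y) = I(W)$. The chain rule applied in the \emph{reverse} order gives
$$
I(W) = \sum_{m=1}^{r} I(a_{m+1}; Y \mid a_{m+2}, \ldots, a_{r+1}),
$$
and the reverse direction is chosen so that exactly the $m-1$ bits $a_2, \ldots, a_m$ get averaged into the channel at step $m$, producing mixture inputs of size $2^{m-1} = k$. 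Unpacking the $m$-th term: for each fixed $\alpha = (\alpha_{m+2}, \ldots, \alpha_{r+1}) \in \{0,1\}^{r-m}$, write $\tilde x_1(\alpha) = x_1 \oplus \sum_{j=m+2}^{r+1} \alpha_j x_j$. Then conditional on $a_{m+1} = b$ and on $\alpha$, the channel input $X$ is uniform over the coset $\tilde x_1(\alpha) \oplus b\,x_{m+1} \oplus \mathrm{span}(x_2, \ldots, x_m)$, which is precisely the multiset $z_m^{(1)}$ (for $b=0$) or $z_m^{(2)}$ (for $b=1$) built from $(\tilde x_1(\alpha), x_2, \ldots, x_{m+1})$. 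Hence the inner conditional MI at fixed $\alpha$ equals $I(W^{(k)}_{\{z_m^{(1)}, z_m^{(2)}\}})$.

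Second, I would take expectations on both sides over the random pair $(x_1, \text{basis})$, with $x_1$ uniform on $\cX$ and $(x_2, \ldots, x_{r+1})$ uniform over ordered bases of $\ff_2^r$. The left side is the constant $I(W)$. For each $m$, the crucial observation is that when $x_1$ is uniform and independent of the basis, the shifted point $\tilde x_1(\alpha)$ is marginally uniform on $\cX$ for every $\alpha$ and jointly independent of $(x_2, \ldots, x_{m+1})$; meanwhile $(x_2, \ldots, x_{m+1})$ itself is uniformly distributed over ordered $\ff_2$-linearly-independent $m$-tuples (being the first $m$ entries of a uniformly random basis). So the averaged $m$-th term is the mean of $I(W^{(k)}_{\{z_m^{(1)}, z_m^{(2)}\}})$ over tuples $(x_1, x_2, \ldots, x_{m+1}) \in \cA$. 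Counting $|\cA| = 2^r \prod_{j=1}^m (2^r - 2^{j-1})$ shows that the uniform measure on $\cA$ reproduces the coefficient in \eqref{eq:sum}, matching the claim term by term.

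The main obstacle I expect is the bookkeeping in the averaging step: carefully verifying that after marginalizing over the completion $(x_{m+2}, \ldots, x_{r+1})$ of the basis and over the conditioned bit values $\alpha$, the shift $\tilde x_1(\alpha)$ has the uniform distribution on $\cX$ and is independent of $(x_2, \ldots, x_{m+1})$, so that the joint law coincides with the uniform measure on $\cA$. Once that independence is explicit, the matching of normalizations is a direct count of ordered linearly independent tuples.
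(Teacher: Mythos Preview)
Your argument is correct and takes a genuinely different route from the paper's. The paper proves the identity by an explicit telescoping of the Kullback--Leibler sum: starting from $I(W)=\frac{1}{2^r}\sum_{x,y}W(y|x)\log\frac{W(y|x)}{P(y)}$, it symmetrizes over $x_2\neq 0$, inserts the intermediate denominator $\tfrac12(W(y|x_1)+W(y|x_1\oplus x_2))$ into the logarithm to peel off $I(W_{\{x_1,x_1\oplus x_2\}})$, and is left with a remainder $T_2$ of the same shape but with $W$ replaced by the two-point mixture; iterating $r$ times, the final remainder vanishes because the $2^r$-point mixture equals $P(y)$. You instead recognize the whole identity at once as the chain rule $I(X;Y)=\sum_{m=1}^r I(a_{m+1};Y\mid a_{m+2},\dots,a_{r+1})$ for a fixed affine parametrization $X=x_1\oplus\sum_j a_jx_j$, followed by averaging over $(x_1,\text{basis})$. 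The ``obstacle'' you flag is real but routine: for each fixed basis and $\alpha$ the map $x_1\mapsto\tilde x_1(\alpha)$ is a bijection of $\cX$, so $\tilde x_1(\alpha)$ is uniform and independent of the basis, and the first $m$ vectors of a uniformly random ordered basis are uniform over ordered independent $m$-tuples; the count $|\cA|=2^r\prod_{j=1}^m(2^r-2^{j-1})$ then drops out. The two proofs are the same chain rule in disguise, but the paper unrolls it with explicit $y$-sums at each step while you package it as a single probabilistic average; your version is shorter and makes the normalization transparent, whereas the paper's hands-on computation avoids setting up the random-basis model and shows directly why the recursion terminates.
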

\begin{proof} First we express the capacity of $W$ as the sum of
symmetric capacities of B-DMCs.
\begin{align*}
&I(W)\\
&= \frac{1}{2^r}\sum_x\sum_y W(y|x) \log{\frac{W(y|x)}{P(y)}}\\
&=\frac{1}{2^r}\sum_y
\frac{1}{2(2^r-1)}\sum_{x_1}\sum_{x_2:x_2\neq
0} \bigg(W(y|x_1)\log{\frac{W(y|x_1)}{P(y)}}\\
&\qquad\qquad\qquad + W(y|x_1 \oplus
x_2)\log{\frac{W(y|x_1 \oplus x_2)}{P(y)}}\bigg)\\
&=\frac{1}{2^r(2^r-1)}\\
&~\cdot\sum_y\sum_{\begin{substack}{x_1,x_2\\x_2 \neq
0}\end{substack}} \bigg(\frac{1}{2}
W(y|x_1)\log{\frac{W(y|x_1)}{\frac12(W(y|x_1) +
W(y|x_1 \oplus x_2))}}\\
&\qquad+ \frac{1}{2} W(y|x_1 \oplus
x_2)\log{\frac{W(y|x_1 \oplus
x_2)}{\frac12(W(y|x_1) + W(y|x_1 \oplus x_2))}}\\
&\qquad + \frac12(W(y|x_1) + W(y|x_1 \oplus x_2))\\
&\qquad~ \cdot\log{\frac{\frac12(W(y|x_1) + W(y|x_1 \oplus
x_2))}{P(y)}}\bigg)\\
&=\frac{1}{2^r(2^r-1)}\Big\{\sum_{\begin{substack}{x_1,x_2\\x_2 \neq
0}\end{substack}} I(W_{\{x_1, x_1 \oplus x_2\}})+T_2\Big\}
\end{align*}
where
\begin{align*}
&T_2=\sum_y\sum_{\begin{substack}{x_1,x_2\\x_2 \neq
0}\end{substack}}\frac12(W(y|x_1) + W(y|x_1 \oplus x_2))\\
&\qquad\qquad \cdot\log{\frac{\frac12(W(y|x_1) + W(y|x_1 \oplus
x_2))}{P(y)}}\Big\}.
\end{align*}
Observe that the condition $x_2\ne 0$ is needed in order to obtain the
expression for $I(W_{\{x_1, x_1 \oplus x_2\}})$.

We will apply the same technique repeatedly. In the next step we add another
sum, this time on $x_3$ which has to satisfy the conditions $x_3\ne 0,
x_3\ne x_2.$ We have
\begin{align*}\label{eq:T3}
&T_2=
\sum_{y}\frac{1}{2(2^r-2)}\sum_{\cA(x_1,x_2,x_3)}
\bigg(\frac12(W(y|x_1) + W(y|x_1 \oplus
x_2))\nonumber \\
&\qquad~~\cdot\log{\frac{\frac12(W(y|x_1) + W(y|x_1 \oplus
x_2))}{P(y)}}\nonumber\\
&\qquad+ \frac12(W(y|x_1 \oplus x_3) + W(y|x_1 \oplus
x_2 \oplus x_3))\nonumber\\
&\qquad~~\cdot\log{\frac{\frac12(W(y|x_1 \oplus x_3) +
W(y|x_1 \oplus x_2 \oplus x_3))}{P(y)}}\bigg)\nonumber\\
&=
\frac{1}{2^r-2}\sum_y\sum_{\cA(x_1,x_2,x_3)}
\bigg(\frac{1}{2}\cdot \frac12( W(y|x_1) +
W(y|x_1 \oplus x_2)) \nonumber\\
&\qquad~\cdot\log\frac{\frac12(W(y|x_1) + W(y|x_1 \oplus
x_2))}{B} + B\log{\frac{B}{P(y)}}\nonumber\end{align*}
\begin{align*}
&\qquad + \frac{1}{2}\cdot \frac12( W(y|x_1 \oplus x_3) +
W(y|x_1 \oplus x_2 \oplus x_3))\nonumber\\
&\qquad~\cdot\log
\frac{\frac12(W(y|x_1 \oplus x_3) +
W(y|x_1 \oplus x_2 \oplus x_3))}{B}\bigg)\nonumber
\end{align*}
where
$B=\frac14( W(y|x_1) + W(y|x_1 \oplus x_2) +
W(y|x_1 \oplus x_3) +
W(y|x_1 \oplus x_2 \oplus x_3) ).$

By now it is clear what we want to accomplish.
Let us again take the sum on $y$ inside. Recalling the definition
of the channel $W^{(k)}$ before Lemma \ref{lemma:mixedbhatta}, we obtain
\begin{align*}
  T_2=\frac 1{2^r-2}\biggl\{\sum_{\cA(x_1,x_2,x_3)} I(W_{\{z_2^{(1)},z_2^{(2)}\}}^{(2)})
+T_3\biggr\};
\end{align*}
here $I(W_{\{z_2^{(1)},z_2^{(2)}\}}^{(2)})$ is the symmetric capacity of the B-DMC
$W_{\{z_2^{(1)},z_2^{(2)}\}}^{(2)}$ with $z_2^{(1)} = \{x_1,
x_1 \oplus x_2\}$ and $z_2^{(2)} = \{x_1 \oplus x_3,
x_1 \oplus x_2 \oplus x_3\},$ and $T_3$ is the term remaining
in the expression for $T_2$ upon isolating this capacity:
  $$
   T_3=\sum_y\sum_{\cA(x_1,x_2,x_3)}B\log\frac B{P(y)}.
  $$
Now repeat the above trick for $T_3,$ namely, average over all the linear combinations
that this time include the vector $x_4$ and isolate the symmetric capacity
of the channel $W^{(k)}$ that arises. Proceeding in this manner, we obtain
\begin{align*}
I(W)&=\frac{1}{2^r(2^r-1)}\sum_{\begin{substack}{x_1,x_2\\
x_2
\neq 0}\end{substack}} I(W_{\{x1, x_1 \oplus x_2\}})\\
& + \frac{1}{2^r(2^r-1)(2^r-2)}\sum_{\cA(x_1,x_2,x_3)} I(W_{\{z_2^{(1)},z_2^{(2)}\}}^{(2)})\\
& + \frac{1}{2^r(2^r-1)(2^r-2)}\sum_{y}\sum_{\cA(x_1,x_2,x_3)} B\log\frac{B}{P(y)}\\
&= \dots \\
&=\sum_{m=1}^{r} \bigg(\frac{1}{2^r} \prod_{j=1}^{m} \frac{1}{2^r -
2^{j-1}}\bigg)\sum_{\cA(x_1,\dots,x_{m+1})}
I(W_{\{z_m^{(1)},z_m^{(2)}\}}^{(k)})
\end{align*}
where the notation $z_m^{(1)},z_m^{(2)}, \cA(x_1,\dots,x_{m+1})$
is introduced before the statement of lemma.
\end{proof}

We continue with the proof of inequality \eqref{eq:upper}.
The term with $m=1$ in \eqref{eq:sum} equals
\begin{align*}
&\frac{1}{2^r(2^r-1)}\sum_{\begin{substack}{x_1,x_2\\x_2
\neq 0}\end{substack}} I(W_{\{x_1, x_1 \oplus
x_2\}})\\
&\leq\frac{1}{2^r(2^r-1)}
\sum_{\begin{substack}{x_1,x_2\\x_2 \neq
0}\end{substack}} \sqrt{1-Z(W_{\{x_1, x_1 \oplus x_2\}})^2}
\\&=\frac{1}{2^r(2^r-1)} \sum_{d=1}^{r}
\sum_{\begin{substack}{x_1,x_2\\\wt_r(x_2) =
d}\end{substack}}
\sqrt{1-Z(W_{\{x_1, x_1 \oplus x_2\}})^2}\end{align*}
\begin{align*}
&\leq\frac{1}{2^r(2^r-1)} \sum_{d=1}^{r} 2^{r+d-1}\\
&~~\cdot\sqrt{1-\bigg(\frac{1}{2^{r+d-1}}
\sum_{\begin{substack}{x_1,x_2\\\wt_r(x_2)
= d}\end{substack}} Z(W_{\{x_1, x_1 \oplus x_2\}})\bigg)^2}\\
&=\frac{1}{2^r-1} \sum_{d=1}^{r} 2^{d-1} \sqrt{1-Z_d^2}
\end{align*}
where the first inequality is from the relation between the
symmetric capacity and the Bhattacharyya parameter of B-DMCs \cite{ari09},
 and the second inequality follows from the fact that the function
$\sqrt{1-x^2}$ is concave for $0 \leq x \leq 1$.

The terms with $m\ge 2$ in \eqref{eq:sum} will be estimated using
Lemma \ref{lemma:mixedbhatta}. We will choose the map $f$ so that
the $r$-vector
  $$
a(f)=(z^{(1)})_s\oplus (z^{(2)})_{f(s)}
  $$
does not depend on $s.$ For instance, one such map is given in Lemma
\ref{lemma:mixedbhatta}. Moreover, out of all such mappings we take
the one for which $\wt_r(a(f))$ is the smallest.\remove{
For any two
vectors $u,v \in \cX$, we define the inequality $u < v$
according to the left-to-right lexicographic order. Clearly,
$\wt_r(u) \leq \wt_r(v)$.} Then the second term becomes
\begin{align*}
&\frac{1}{2^r(2^r-1)(2^r-2)}\sum_{\cA(x_1,x_2,x_3)}
I(W_{\{z_2^{(1)},z_2^{(2)}\}}^{(2)})\\
&\leq \frac{1}{2^r(2^r-1)(2^r-2)} \sum_{\cA(x_1,x_2,x_3)}
\sqrt{1-Z(W_{\{z_2^{(1)},z_2^{(2)}\}}^{(2)})^2}\\
&\leq \frac{1}{2^r(2^r-1)(2^r-2)} \sum_{\cA(x_1,x_2,x_3)}
\sqrt{1-\frac{D^2}4}\\
&= \frac{1}{2^r(2^r-1)(2^r-2)} \sum_{d=1}^{r} \sum_{\begin{substack}
{\cA(x_1,x_2,x_3)\\
\wt_r(x_3)=d}\end{substack}} \sqrt{1-\frac{D^2}4}\\
&\leq\frac{1}{2^r(2^r-1)(2^r-2)} \sum_{d=1}^{r} 2^r\cdot\alpha_d\\
&~\cdot\sqrt{1-\Bigg(\frac{1}{2^{r+1}\cdot\alpha_d}
\sum_{\begin{substack} {\cA(x_1,x_2,x_3)\\
\wt_r(x_3)=d}\end{substack}}
D\Bigg)^2}\\
&\leq\frac{1}{(2^r-1)(2^r-2)} \sum_{d=1}^{r} \alpha_d \sqrt{1-Z_d^2}
\end{align*}
where
\begin{align*}
D&=Z(W_{\{x_1,x_1 \oplus x_3\}}) +
Z(W_{\{x_1 \oplus x_2, x_1 \oplus x_2 \oplus x_3\}})\\
\alpha_d &= 2^{d-1}\cdot(2^{r+1}-3\cdot2^{d-1}-1)
\end{align*}
which is the number of terms with $\wt_r(x_3) = d, x_1=0$ under the given
condition. Repeating this process, we obtain the claimed result.
The full calculation is cumbersome, but its essence is
captured in the example for $r=3$ which we write out in full:
\begin{align*}
I&(W)= \sum_{m=1}^{3} \bigg(\frac{1}{8} \prod_{j=1}^{m} \frac{1}{8 -
2^{j-1}}\bigg) \sum_{\cA(x_1,\dots,x_{m+1})} I(W_{\{z_m^{(1)},z_m^{(2)}\}}^{(m)})
\end{align*}
\begin{align*}
&=\frac{1}{8\cdot 7}\sum_{\cA(x_1,x_2)} I(W_{\{x_1, x_1 \oplus x_2\}}) \\
&~~+ \frac{1}{8\cdot7\cdot6}\sum_{\begin{substack}
{\cA(x_1,x_2,x_3)}\end{substack}}
I(W_{\{z_2^{(1)},z_2^{(2)}\}}^{(2)})\\
&~~+ \frac{1}{8\cdot7\cdot6\cdot4}\sum_{\cA(x_1,x_2,x_3,x_4)}
I(W_{\{z_3^{(1)},z_3^{(2)}\}}^{(3)})\\
&\leq \frac{1}{7}\bigg(\sqrt{1-Z_1^2} + 2\sqrt{1-Z_2^2} +
4\sqrt{1-Z_3^2}\bigg)\\
&~~+ \frac{1}{7\cdot6}\bigg(12\sqrt{1-Z_1^2} + 18\sqrt{1-Z_2^2} +
12\sqrt{1-Z_3^2}\bigg)\\
&~~+ \frac{1}{7\cdot6\cdot4}\bigg(96\sqrt{1-Z_1^2} +
48\sqrt{1-Z_2^2} + 24\sqrt{1-Z_3^2}\bigg)\\
&=\sqrt{1-Z_1^2} + \sqrt{1-Z_2^2} + \sqrt{1-Z_3^2}
\end{align*}
This completes the proof of \eqref{eq:upper}.

\def\cprime{$'$}

\end{document}